\newtheorem{theorem}{Theorem}[section]
\newtheorem{lemma}[theorem]{Lemma}
\newtheorem{proposition}[theorem]{Proposition}
\newcommand{\DETAILS}[1]{}
\newcommand{\supp}{\operatorname{supp}}
\newcommand{\Curl}{\operatorname{curl}}
\newcommand{\R}{\mathbb{R}}
\newcommand{\N}{\mathbb{N}}
\newcommand{\cH}{{\cal{H}}}
\newcommand{\cE}{\mathcal{E}}
\newcommand{\cQ}{\mathcal{Q}}
\newcommand{\eps}{{\varepsilon}}
\newcommand{\vphi}{{\varphi}}
\newcommand{\e}{{\varepsilon}}
\newcommand{\vep}{{\varepsilon}}           
\newcommand{\po}{{x}}
\newcommand{\posp}{{\underline{x}}}
\newcommand{\nc}{\newcommand}
\nc{\G}{\Gamma} \nc{\g}{\gamma} \nc{\Omt}{\tilde{\Omega}}
\nc{\ta}{\tau} \nc{\w}{\omega} \nc{\io}{\iota} \nc{\h}{\theta}
\nc{\Si}{\Sigma}
\nc{\be}{\begin{equation}} \nc{\la}{\label} \nc{\ba}{\begin{array}}
\nc{\ea}{\end{array}} \nc{\bs}{\begin{split}} \nc{\es}{\end{split}}
\nc{\p}{\partial} \nc{\ra}{\rightarrow} \nc{\ran}{\rangle}
\nc{\lan}{\langle}
\nc{\bP}{\bar{P}} \nc{\bQ}{\bar{Q}} \nc{\bL}{\bar{L}} \nc{\1}{{\bf
\begin{document}
\title{\ Multipolarons in a Constant Magnetic Field}
\author{ I.~Anapolitanos\footnote{Supported by DFG under Grant GR 3213/1-1.}~~and M.~Griesemer \\ Universit\"at Stuttgart, Fachbereich Mathematik \\ 70550 Stuttgart, Germany}
\date{}
\maketitle

\vspace{-5mm}
\begin{center}
\emph{Dedicated to the memory of Walter Hunziker}
\end{center}

\abstract{The binding of a system of $N$ polarons subject to a
constant magnetic field of strength $B$ is investigated within the
Pekar-Tomasevich approximation. In this approximation the energy of
$N$ polarons is described in terms of a non-quadratic functional
with a quartic term that accounts for the electron-electron
self-interaction mediated by phonons. The size of a coupling
constant, denoted by $\alpha$, in front of the quartic is determined
by the electronic properties of the crystal under consideration, but
in any case it is constrained by $0<\alpha<1$. For all values of $N$
and $B$ we find an interval $\alpha_{N,B}<\alpha<1$ where the $N$
polarons bind in a single cluster described by a minimizer of the
Pekar-Tomasevich functional. This minimizer is exponentially
localized in the $N$-particle configuration space $\R^{3N}$.}

\section{Introduction}
The electron-phonon interaction in a polar crystal mediates an
interaction between pairs of electrons which becomes an
electrostatic Coulomb attraction in the Pekar-Tomasevich
approximation. This attraction competes with the Coulomb repulsion
between the equally charged electrons, and the question arises
whether $N$ electrons may form a bound cluster. Due to the
constraint on the parameters of the model, the $1/|x|$-part of the
electron-electron interaction is repulsive. There remains, however
an attractive short range interaction, which seems to be of van der
Waals type and which may lead to $N-$particle bound states
\cite{Le}. This phenomenon of bound \emph{multipolarons} had
previously been observed in Fr\"ohlich's large polaron model on
which the Pekar-Tomasevich approximation is based
\cite{VPD1990,BKD2005}. Similarly, the binding of polarons subject
to a constant magnetic field had been investigated within the
Fr\"ohlich model \cite{BD1996}. Yet, in that case, the analysis in
the literature is based on poorly justified variational estimates,
and the conclusions remain doubtful. The present paper establishes,
within the Pekar-Tomasevich approximation, the existence of bound
$N$-polaron clusters in a constant magnetic field of any strength.
It is a continuation of a previous work of one of us, concerning the
case $N=2$ \cite{GHW}.

The Pekar-Tomasevich approximation to the large polaron model of
Fr\"ohlich describes the energy of $N$ polarons through an effective
functional that depends on the wave function $\Psi\in\mathcal{H}_N:= \wedge^N L^2(\R^{3}\times\{1,\ldots,q\})$ of the
particles only. We are mainly interested in the case of spin-$1/2$
fermions but we can allow for arbitrary $q\in\N$ without more
effort. The functional is then given by
\begin{equation}\label{PTF}
    \mathcal{E}^{N,\alpha}(\Psi)=\bigg\langle \Psi, \bigg(\sum_{j=1}^N D_{A,x_j}^2+ \sum_{i<j} \frac{U}{|x_i-x_j|}\bigg)\Psi
\bigg\rangle - \frac{\alpha}{2} \int \frac{\rho_\Psi(x)
\rho_\Psi(y)}{|x-y|} dx dy,
\end{equation}
where $U,\alpha>0$ are constants, and
\begin{equation}\label{rdef}
\rho_{\Psi}(x):=\sum_{j=1}^N \sum_{\sigma_j =1}^q \int
|\Psi(\posp_1,\ldots, \posp_{j-1}, (x, \sigma_j) ,
\posp_{j+1},\ldots, \posp_N)|^2
d\posp_1\ldots\widehat{d\posp_j}\ldots d\posp_N,
\end{equation}
is the density associated to $\Psi$. We have introduced the
notations $\posp_j=(\po_j, \sigma_j)$ for elements of $\R^3\times
\{1,\ldots,q\}$ and we set $ \int d \posp_j=\sum_{\sigma_j=1}^q \int
d \po_j$. Of course in \eqref{rdef} the sum with respect to $j$ may
be replaced by a factor of $N$, due to the symmetry of $\Psi$; but
we shall allow for Boltzons later on, and hence we prefer
\eqref{rdef} as the definition of $\rho_\Psi$. Furthermore,
$D_{A,x}:=-i\nabla+A(x)$ where the vector potential $A:\R^3\to\R^3$
generates a magnetic field $B=\Curl A$. We are primarily interested
in the case where $B$ is constant and hence $A$ will be assumed
linear. The positive parameters $U,\alpha$ are constrained by
$\alpha<U$ due to their role in the Fr\"ohlich large polaron model.
Mathematically, any real values are conceivable for $U$ and
$\alpha$, but $0<U<\alpha$ leads to thermodynamic instability
\cite{GM2010}. The energy of the fields $U^{3N/2}\Psi(U
\po_1,\sigma_1,..., U \po_N, \sigma_N)$ and $UA(Ux)$ upon the
substitutions $Ux\to x$ and $\alpha/U\to \alpha$ becomes
proportional to $U^2$. We therefore set $U=1$ and we require that
$0<\alpha<1$.

It is easy to see, using the diamagnetic and the Hardy inequalities,
that $ \mathcal{E}^{N,\alpha}$ is bounded below if restricted to the
unit sphere $\|\Psi\|=1$. The minimal energy,
\begin{equation}\label{PTUgs}
   E^{N,\alpha}_{PT}:= \inf_{\|\Psi\|=1}
\mathcal{E}^{N,\alpha}(\Psi),
\end{equation}
is therefore finite. By moving particles apart, one can see that
$E_{PT}^{N,\alpha} \leq E_{PT}^{k,\alpha} +E_{PT}^{N-k,\alpha}$ for
$k=1,\ldots,N-1$. The question is, whether it takes energy to do
this, that is, whether for some $\alpha<1$,
\begin{equation}\label{physicalbinding}
    \Delta E^{N,\alpha}_{PT} :=  \min_{1\leq k\leq N-1}\left\{E^{k,\alpha}_{PT} + E^{N-k,\alpha}_{PT}\right\} - E^{N,\alpha}_{PT}  > 0.
\end{equation}
Our main result is the following theorem:

\begin{theorem}\label{main1}
Assume that the vector potential $A$ is linear (constant magnetic
field $B$). Then, for all $N \in \mathbb{N}$ there exists
$\alpha_{N,B}<1$ such that for $\alpha_{N,B}<\alpha<1$ and $U=1$:
\begin{itemize}
\item[(a)] the binding inequality \eqref{physicalbinding} holds,
\item[(b)] the functional \eqref{PTF} has a minimizer.
\end{itemize}
\end{theorem}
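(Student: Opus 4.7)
The plan is to prove the binding inequality (a) first via a variational trial-function construction, and then to deduce existence of a minimizer (b) from (a) by a concentration-compactness argument in which (a) rules out loss of mass to infinity. A standing tool is the unitary magnetic translation $T_v$ for the constant field $B$, which after a suitable gauge choice intertwines the kinetic operator, $D_A T_v = T_v D_A$, and acts on densities as $\rho_{T_v \Psi} = \rho_\Psi(\cdot - v)$.

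For (a), I would argue by induction on $N$, with the base case $N = 2$ supplied by \cite{GHW}. Suppose (a) and (b) hold for all $n < N$, so that minimizers $\Psi_k^*$ of $\mathcal{E}^{k,\alpha}$ exist for $1 \le k < N$ in the appropriate $\alpha$-window and are exponentially localized by Agmon-type estimates applied to the positive binding gap $\Delta E^{k,\alpha}_{PT}$. For each splitting $N = k + (N - k)$, construct a trial state $\Psi_N^v$ by taking the antisymmetrized product $\mathcal{A}\bigl(\Psi_k^* \otimes T_v \Psi_{N-k}^*\bigr)$ and adding a small polarization (dipole-type) correction designed to capture the van der Waals correlation between the two clusters. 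Using $D_A T_v = T_v D_A$ together with the gauge-invariance of the density, the energy splits as
\[
\mathcal{E}^{N,\alpha}(\Psi_N^v) - E^{k,\alpha}_{PT} - E^{N-k,\alpha}_{PT} = (1-\alpha)\iint \frac{\rho_k^*(x)\, \rho_{N-k}^*(y-v)}{|x-y|}\, dx\, dy - \frac{c}{|v|^6} + O\bigl(e^{-\gamma|v|}\bigr),
\]
for some constants $c, \gamma > 0$. The first (repulsive) Hartree term scales as $(1-\alpha)/|v|$ at large $|v|$, while the van der Waals term is of order $1/|v|^6$ uniformly in $\alpha$. Fixing $|v|$ large but finite and then choosing $\alpha$ sufficiently close to $1$ makes the right-hand side strictly negative, establishing (a) for $\alpha \in (\alpha_{N,B}, 1)$.

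The main difficulty in (a) is extracting the attractive van der Waals contribution in the presence of the magnetic phases carried by $T_v$: while the direct Hartree cross-term is gauge-invariant and therefore reduces cleanly to its non-magnetic form, the inter-cluster exchange integrals (antisymmetrization overlaps) carry oscillating phases $e^{i\phi_v(x)}$, and the polarization correction requires a second-order Rayleigh--Schr\"odinger-type expansion adapted to the magnetic setting. Controlling the sign and the size of the van der Waals coefficient $c$ uniformly in $B$ is the crux.

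For (b), take a minimizing sequence $\{\Psi_n\}$. The diamagnetic and Hardy inequalities yield a uniform $H^1$-bound on $|\Psi_n|$, hence on $\sqrt{\rho_{\Psi_n}}$. I would apply Lions' concentration-compactness to the sequence of one-body densities (each of mass $N$), with ordinary translations replaced by their magnetic counterparts. Vanishing is excluded by the strict upper bound $E^{N,\alpha}_{PT} < N \inf\sigma(D_A^2)$, verified by a simple trial function such as a coherent magnetic Gaussian. Dichotomy would force $E^{N,\alpha}_{PT} \ge E^{k,\alpha}_{PT} + E^{N-k,\alpha}_{PT}$ for some $1 \le k \le N - 1$, directly contradicting (a). Hence tightness holds modulo magnetic translations, and the weak $L^2$-limit of an appropriately translated subsequence is a minimizer by weak lower semi-continuity of the magnetic kinetic form, strong $L^p_{\mathrm{loc}}$-convergence of the densities, and continuity of the Coulomb-type functionals on the resulting bounded sets.
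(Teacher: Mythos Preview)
Your approach to both parts differs substantially from the paper's, and in each case there is a genuine gap.

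For (a), the van der Waals trial-state route you propose is precisely the one the paper singles out (following Lewin and Lieb--Thirring) as \emph{not} extending to the magnetic setting: that argument relies on spherical averaging over the direction of $v$ and Newton's theorem to control the multipole expansion, and the constant magnetic field destroys the rotational invariance this requires. You acknowledge that ``controlling the sign and the size of the van der Waals coefficient $c$ uniformly in $B$ is the crux,'' but you give no indication of how to do it; as written this is the missing idea, not a technicality. The paper sidesteps the whole issue by first observing that $\alpha\mapsto E_{PT}^{k,\alpha}$ is concave (an infimum of affine functions) and hence continuous, so it suffices to prove \emph{strict} binding at $\alpha=1$. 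There the argument is by contradiction: if $E_{PT}^N=E_{PT}^k+E_{PT}^{N-k}$, a suitably cut-off, magnetically translated, antisymmetrized tensor product of the sub-minimizers is an approximate minimizer of $\mathcal{E}^N$ and therefore satisfies the Euler--Lagrange equation up to an error $O(e^{-cR})$; but the residual inter-cluster Coulomb term $J_k$ contains an irreducible dipole--dipole piece of size $\gtrsim R^{-3}$ in $L^2$, which is incompatible with the exponential bound. No van der Waals coefficient ever has to be shown positive.

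For (b), the step ``dichotomy would force $E_{PT}^{N,\alpha}\ge E_{PT}^{k,\alpha}+E_{PT}^{N-k,\alpha}$ for some $1\le k\le N-1$'' is exactly where the standard argument breaks. Dichotomy of the one-body densities gives a split of mass $\lambda\in(0,N]$ that need not be an integer, and even when it is, the $N$-body wave function does not factor into $k$- and $(N-k)$-body pieces from which one could read off the subadditivity inequality; the paper explicitly warns that ``dichotomy may mean various things for the wave function'' and does not attempt to exclude them directly. Instead it uses the linearization $\mathcal{E}^N(\Psi)=\langle\Psi,H_{\rho_\Psi}^N\Psi\rangle$ together with $H_\sigma^N\ge E_{PT}^N$ for every density $\sigma$, and proves an HVZ-type operator inequality
\[
H_{\rho_n}^N \;\ge\; E_{PT}^N + d\,(1-J_\varepsilon) - O(\sqrt{\varepsilon}),
\]
with $J_\varepsilon$ compactly supported and $d>0$ coming from the binding gap \emph{and} a uniform lower bound on $D(\rho_{n,1})$. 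This forces $\langle\Psi_n,(1-J_\varepsilon)\Psi_n\rangle\to 0$ along the minimizing sequence, i.e.\ concentration of the full $N$-body wave function after magnetic translation, without ever converting density dichotomy into a wave-function splitting.
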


Analog results hold in the case of bosons and boltzons, that is, for
$\mathcal{H}_N=\otimes_s^N L^2(\R^{3}\times\{1,\ldots,q\})$, the
symmetric product of $N$ copies of
$L^2(\R^{3}\times\{1,\ldots,q\})$, or $\mathcal{H}_N=\otimes^N
L^2(\R^{3})$ without symmetry requirements. The proofs in these
cases are similar and in the case of Boltzons the proof of (a)
becomes much easier. Yet the property (a) even for boltzons is a
subtle correlation effect since the restriction $\alpha<1$ means
that the Coulomb repulsion dominates the attraction for states of
the form $\vphi_1\otimes\ldots\otimes \vphi_N$.  We remark that 
Theorem~\ref{main1}  has consequences for the binding of boltzonic polarons in the large polaron model of
Fr\"ohlich \cite{AL,GW}.

For $\alpha=0$ there is no minimizer and, in the absence of magnetic
fields, there is no binding for $\alpha$ small enough \cite{FLST}.
The existence of a minimizer is a phenomenon due to the
non-linearity and it occurs whenever the binding inequality
\eqref{physicalbinding} is satisfied (and $\alpha>0$). For other
non-quadratic energy-functionals associated with many-body quantum
systems this has previously been pointed out and described as a
non-linear HVZ-Theorem \cite{Le, Fr}. In this paper we show that
$(a)\Rightarrow (b)$ is a consequence of a \emph{linear} HVZ-Theorem
for an $N$-body Hamiltonian that is intimately related with the
physics of the polaron problem: there is a Hamiltonian $H_\sigma$
depending on a charge density $\sigma\in L^1(\R^3)$ such that
$$\mathcal{E}^{N,\alpha}(\Psi) \leq \langle\Psi,H_{\sigma}\Psi\rangle$$
with equality for $\sigma=\rho_\Psi$. We may think of $\alpha
\sigma$ as the charge density due to a hypothetical, possibly
non-optimal, lattice deformation caused by the electrons. For
$\mathcal{E}^{N,\alpha}(\Psi_n)$ near $E_{PT}^N$, $(\Psi_n)$ being a
minimizing sequence with densities $(\rho_n)$, the binding
inequality implies that $H_{\rho_n}$ has an isolated ground state
energy separated from the essential spectrum of $H_{\rho_n}$ by a
gap that is uniform in $n$ along a subsequence. This uniformity
implies uniform localization of $\Psi_n$ (or concentration of
minimizing sequences) up to magnetic translations.

Our proof of part (a) in Theorem \ref{main1} is based on a
variational argument that is inspired by \cite{GHW} but is
considerably more involved in the present case of particles with
statistics.

The following theorem gives further information about the minimizers
found in Theorem~\ref{main1}. In Theorem~\ref{main2} and throughout
the paper we use the notation $V_\rho := \rho*|\cdot|^{-1}$.

\begin{theorem}\label{main2}
If $\Psi \in \mathcal{H}_N$ is a minimizer of $
\mathcal{E}^{N,\alpha}$, then it solves the non-linear Schr\"odinger
equation
\begin{equation}\label{EL}
  \left(\sum_{k=1}^N (D_{A,x_k}^2-\alpha V_{\rho}(x_k))+\sum_{i<j}
  \frac{1}{|x_i-x_j|}\right) \Psi =\lambda \Psi,
\end{equation}
where $\lambda \in \mathbb{R}$ is the lowest point in the spectrum
of the Schr\"odinger operator on the left hand side and $\rho$ is
the density of $\Psi$. Moreover, if \eqref{physicalbinding} holds
then the spectrum of the  Schr\"odinger operator on the left hand side is
discrete below $\lambda+\Delta E^{N,\alpha}_{PT}$ and hence if 
$\beta\in\R$ with $\beta^2< \Delta E^{N,\alpha}_{PT}$, then
\begin{equation}\label{expodecay1}
e^{\beta|.|} \Psi \in \mathcal{H}_N.
\end{equation}
\end{theorem}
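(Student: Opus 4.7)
Denote by $H^{\text{EL}}_\rho$ the Schr\"odinger operator on the left-hand side of \eqref{EL}; the plan addresses the three assertions in turn. \textbf{Stage 1 (Euler-Lagrange equation).} I would derive \eqref{EL} from the standard Lagrange multiplier computation: varying $\Psi\mapsto\Psi+t\Phi$ and exploiting the symmetry of $|x-y|^{-1}$, the derivative of the quartic density-density term produces the factor $-\alpha V_\rho(x_k)$ acting on each coordinate of $\Psi$, and $\lambda\in\R$ is the multiplier for the constraint $\|\Psi\|=1$. To identify $\lambda = \inf\sigma(H^{\text{EL}}_\rho)$ I would use the elementary identity
\begin{equation*}
\langle\Phi, H^{\text{EL}}_\rho \Phi\rangle - \mathcal{E}^{N,\alpha}(\Phi) = \frac{\alpha}{2}\iint\frac{(\rho_\Phi-\rho)(x)(\rho_\Phi-\rho)(y)}{|x-y|}\,dx\,dy - \frac{\alpha}{2}\iint\frac{\rho(x)\rho(y)}{|x-y|}\,dx\,dy,
\end{equation*}
valid for every unit $\Phi\in\mathcal{H}_N$. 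The right-hand side is minimized at $\Phi=\Psi$ (the first summand vanishes and $\mathcal{E}^{N,\alpha}(\Phi)\geq E^{N,\alpha}_{PT}$), so $\lambda = \langle\Psi, H^{\text{EL}}_\rho\Psi\rangle = \inf\sigma(H^{\text{EL}}_\rho) = E^{N,\alpha}_{PT} - \tfrac{\alpha}{2}\iint\rho(x)\rho(y)|x-y|^{-1}\,dx\,dy$.

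\textbf{Stage 2 (spectral gap).} Next I would establish an HVZ-type theorem for $H^{\text{EL}}_\rho$. Since $V_\rho\to 0$ at infinity, a Persson/partition-of-unity argument adapted to the magnetic setting via magnetic translations should yield
\begin{equation*}
\inf\sigma_{\text{ess}}(H^{\text{EL}}_\rho) = \min_{1\leq k\leq N-1}\{E_k(\rho) + E^{\text{rep}}_{N-k}\},
\end{equation*}
where $E_k(\rho)$ is the ground-state energy of the $k$-particle analog of $H^{\text{EL}}_\rho$ (same external potential $-\alpha V_\rho$) and $E^{\text{rep}}_{N-k}$ is that of the $(N-k)$-particle operator with only magnetic kinetic energy and pairwise Coulomb repulsion. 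The identity from Stage 1 applied at particle number $k$ gives $E_k(\rho)\geq E^{k,\alpha}_{PT} - \tfrac{\alpha}{2}\iint\rho(x)\rho(y)|x-y|^{-1}\,dx\,dy$, while dropping the non-positive density-density term in $\mathcal{E}^{N-k,\alpha}$ gives $E^{N-k,\alpha}_{PT}\leq E^{\text{rep}}_{N-k}$. Summing these and substituting the expression for $\lambda$ from Stage 1, the Coulomb constants cancel and one obtains $\inf\sigma_{\text{ess}}(H^{\text{EL}}_\rho)-\lambda \geq \Delta E^{N,\alpha}_{PT}$, which is the claimed discreteness statement.

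\textbf{Stage 3 (exponential decay).} Given the gap from Stage 2, \eqref{expodecay1} follows by a standard Agmon-type argument. For a smooth bounded $f_R$ approximating $|x|$ on $\R^{3N}$ with $|\nabla f_R|\leq 1$, equation \eqref{EL} combined with an IMS-type commutator computation (handling the magnetic gauge via the diamagnetic structure of $D^2_{A,x}$) yields
\begin{equation*}
\langle e^{\beta f_R}\Psi, (H^{\text{EL}}_\rho-\lambda-\beta^2)\,e^{\beta f_R}\Psi\rangle \leq 0.
\end{equation*}
Decomposing $e^{\beta f_R}\Psi$ into contributions inside and outside a large ball, and using $\beta^2 < \Delta E^{N,\alpha}_{PT} \leq \inf\sigma_{\text{ess}}(H^{\text{EL}}_\rho)-\lambda$ to bound $H^{\text{EL}}_\rho-\lambda-\beta^2$ below by a positive constant on the exterior, one controls $\|e^{\beta f_R}\Psi\|$ uniformly in $R$ and then lets $R\to\infty$.

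The main obstacle is Stage 2: in a constant magnetic field, ordinary spatial translations do not commute with $H^{\text{EL}}_\rho$, so the standard geometric localization underlying classical HVZ proofs must be replaced by an argument based on magnetic translations. Care is also needed because $V_\rho$, though decaying at infinity, is not compactly supported, so the decoupling of the escaping cluster from the external potential has to be quantified rather than just asserted.
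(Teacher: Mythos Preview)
Your proposal is correct and the key inequalities are the same as the paper's, but the packaging of Stage~2 differs. The paper does not prove an HVZ formula for $H^{\text{EL}}_\rho$ from scratch; instead it reuses Proposition~\ref{IMS}, the operator inequality established earlier for the existence proof, by applying it to the constant minimizing sequence $\Psi_k\equiv\Psi$. Since the density $\rho_k\equiv\rho$ is trivially concentrated ($\lambda=N$ in Proposition~\ref{lions}), the ``$\lambda=N$'' case of Proposition~\ref{IMS} yields directly
\[
H_\rho^N \geq E_{PT}^N + \Delta E^N(1-J_\varepsilon) - O(\sqrt{\varepsilon}),
\]
with $J_\varepsilon$ compactly supported. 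Letting $\varepsilon\to 0$ gives the Persson-type bound $\Sigma\geq E_{PT}^N+\Delta E^N$ and one invokes Agmon \cite{AG} immediately. Your route---compute the HVZ threshold as $\min_k\{E_k(\rho)+E^{\text{rep}}_{N-k}\}$ and then bound each piece via the linearization identity---arrives at the same conclusion, and indeed the cluster-by-cluster bounds you write down are exactly the content of the case analysis inside the proof of Proposition~\ref{IMS} (with $\rho_{k,1}\approx\rho$ and $\rho_{k,2}\approx 0$). What you would have to supply is the IMS partition-of-unity argument in the magnetic setting, which is precisely the machinery of Lemmas~\ref{partun}--\ref{Vcontrol}; so the ``obstacle'' you flag is real but already handled in Section~\ref{operineqnovan}. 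The paper's route is therefore more economical in context, while yours is more self-contained and makes the HVZ structure explicit.

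For Stage~1, a minor remark: the paper skips the Lagrange-multiplier computation entirely. It observes from the linearization (Lemma~\ref{Hsin}) that $H_\rho^N\geq E_{PT}^N$ with equality at $\Psi$, so $\Psi$ is automatically a ground state of the Friedrichs extension of $H_\rho^N$, and $H_\rho^N\Psi=E_{PT}^N\Psi$; subtracting $D(\rho)$ gives \eqref{EL} with $\lambda=E_{PT}^N-\alpha D(\rho)$. This is equivalent to your identity-based argument but slightly quicker.
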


In the case $N=1$, $A=0$ the Pekar-Tomasevich functional reduces to the
Pekar or Choquard functional which is well known to be minimized by
a spherically symmetric, positive function that is unique up to
translations \cite{L,Li}.

Existence of a magnetic polaron and the binding of two polarons subject to an external
magnetic field, not necessarily constant, was previously established
in \cite{GHW}. In the present paper, the methods developed in
\cite{GHW} are extended and generalized to the case of $N>2$
particles of fermionic,
bosonic or bolzonic nature. Results similar to ours in the case $A=0$ where
previously obtained by Lewin in \cite{Le}. Lewin establishes a bound
on the binding energy of the form of a van der Waals potential with
exponentially small corrections. To this end he uses the variational
state introduced by Lieb and Thirring in connection with the van der
Waals binding of neutral atoms and molecules \cite{LieThi1986}. This
approach makes crucial use of spherical averaging and the Newton's
theorem. It brakes down in the presence of a magnetic field where
the rotational invariance of $\mathcal{E}^{N,\alpha}$ is broken. Moreover,
in the absence of a magnetic field our Theorem~\ref{main2} gives more 
information than the corresponding result of Lewin, as it relates the binding
energy $\Delta E^{N,\alpha}_{PT}$ to the gap between $\lambda$ and the
essential spectrum of the Hamiltonian in \eqref{EL}. Lewin, in the case of
binding, merely finds that such a gap exits provided that $\alpha>1-1/N$. 

The Theorem~\ref{main2} opens the following new view upon the phenomenon of $N$-polaron binding: if a Hamiltonian of
the type in \eqref{EL} with total positive charge $\alpha N$ is shown not
to bind $N$ electrons, then binding of $N$ polarons is
excluded. Here binding means positivity of the binding energy. --
In the case where the density is spherically symmetric and $A=0$ we
deduce from \cite{Lieb1984} that the Hamiltonian has no ground state if
$\alpha \leq (1-N^{-1})/2$. This leads to the following corollary: if
$\alpha\leq (1-N^{-1})/2$ then a hypothetical minimizer of $\mathcal{E}^{N,\alpha}$ cannot
have a spherically symmetric density.

This paper is organized as follows: In Section \ref{outline} we
outline the proof of our main Theorem and we introduce the
most important tools. In Section \ref{operineqnovan} we prove an
operator inequality which is of crucial importance for the proof of
existence of a minimizer of the Pekar-Tomasevich functional, as well as the proof of the second part
of Theorem \ref{main2}. In
Section \ref{minexi} we use the operator inequality to prove
existence of a minimizer and exponential decay of any minimizer of
the Pekar-Tomasevich functional. In Section
\ref{bindinginductionproof} we establish the binding inequality
\eqref{physicalbinding}.

\medskip\noindent
\emph{Acknowledgements.} The first author (I.A.) is grateful to
Fabian Hantsch and David Wellig for numerous stimulating discussions and for
introducing him to the theory of multipolarons. He also thanks Mathieu
Lewin for interesting discussions on $N$-body quantum systems.


\section{Preparations and elements of the proofs} \label{outline}

The minimal energy $E_{PT}^{k,\alpha}$ is continuous in $\alpha$
because it is concave in $\alpha$ as the infimum of the affine
functions $\alpha \mapsto \mathcal{E}^{k,\alpha}(\Psi)$. Hence, it
suffices to establish the binding in the case $\alpha=1$.  Our proof
that binding implies existence of a minimizer, i.e $(a) \implies
(b)$ in Theorem \ref{main1}, as well as the proof of Theorem
\ref{main2} readily generalize from the case $\alpha=1$ to any
$\alpha>0$. We therefore put $\alpha=1$ for notational simplicity,
that is,
\begin{equation}\label{PT}
    \mathcal{E}^N(\Psi):= \bigg\langle \Psi, \bigg(\sum_{k=1}^N D_{A,x_k}^2+ \sum_{j<k} \frac{1}{|x_j-x_k|}\bigg)\Psi \bigg\rangle
    - D(\rho_{\Psi}),
\end{equation}
where $D(\rho):=D(\rho,\rho)$,
\begin{equation}\label{ddef}
    D(\rho,\sigma):= \frac{1}{2} \int \frac{\rho(x) \sigma(y)}{|x-y|}\,
    dxdy,
\end{equation}
and
\begin{equation}\label{PTgs}
 E^{N}_{PT}:= \inf_{\|\Psi\|=1} \mathcal{E}^{N}(\Psi).
\end{equation}
The domain of $\mathcal{E}^{N}$ is the form domain, $
\mathcal{Q}_{N,A}$,  of $\sum_{k=1}^N D_{A,x_k}^2$, that is, $
\cQ_{N,A} = \{\Psi\in \cH_N: D_{A,x_k}\Psi\in L^2, \forall k \in
\{1,...,N\}\}$, and we use $\|\cdot\|_{\cQ_{N,A}}$ for the
corresponding form norm. By a \emph{minimizer of} $\mathcal{E}^{N}$
we shall always mean a normalized vector $\Psi\in\cH_N$ with
$\Psi\in\mathcal{Q}_{N,A}$ and $\mathcal{E}^N(\Psi) =
E^{N}_{PT}$. Throughout the paper we
use $\langle\cdot,\cdot\rangle$ and $\|\cdot\|$ for the usual inner
products and norms of $\otimes^N L^2(\R^{3}\times\{1,\ldots,q\})$ and $\mathcal{H}_N$.

By the above explanations it remains to prove the following theorem
in order to establish Theorem \ref{main1}:

\begin{theorem}\label{main}
Assume that the vector potential $A$ is linear. Then,
\begin{itemize}
\item[(a)] there exists a minimizer of $\mathcal{E}^1$,
\item[(b)] if $\mathcal{E}^1,...,\mathcal{E}^{N-1}$ have minimizers then
\begin{equation}\label{binding}
E^{N}_{PT} < E^{k}_{PT} + E^{N-k}_{PT}, \quad \forall k=1,...,N-1,
\end{equation}
\item[(c)] if \eqref{binding} holds then $\mathcal{E}^N$ has a minimizer.
\end{itemize}
\end{theorem}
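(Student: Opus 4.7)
The plan is to prove the three parts in the order (a), (c), (b), since (c) gives the mechanism by which minimizers propagate up from $N=1$ and (b) will rely on the existence of minimizers for smaller $N$. An overall induction on $N$ then gives Theorem~\ref{main1}: part (a) provides the base case, (b) gives the binding inequality whenever minimizers up to $N-1$ exist, and (c) upgrades that binding inequality into a minimizer for $N$ particles.

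For part (a), I would work with a minimizing sequence $\psi_n$ for $\mathcal{E}^1$. The functional $\mathcal{E}^1$ is invariant under magnetic translations $T_a$ (ordinary translation composed with the natural gauge phase for linear $A$), so densities can be recentered. Using the diamagnetic inequality together with $\mathcal{E}^1(\psi)\le E^1_{PT}+o(1)$ yields an $H^1$ bound on $|\psi_n|$ and a $\mathcal{Q}_{1,A}$ bound on $\psi_n$. The crucial step is to prevent $\rho_n=|\psi_n|^2$ from vanishing: by concentration-compactness (or the ``nonvanishing lemma'' used in \cite{GHW}), the only obstruction is $\sup_a \int_{B_1(a)}\rho_n\to 0$, and this would force $D(\rho_n)\to 0$, hence $\mathcal{E}^1(\psi_n)\ge \langle\psi_n,D_A^2\psi_n\rangle+o(1)\ge 0$. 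Since one checks $E^1_{PT}<0$ by inserting a Gaussian trial state, nonvanishing holds, and after magnetically translating so the mass concentrates near the origin, a weak limit $\psi$ is nonzero. Lower semicontinuity of the quadratic part and strong $L^p_{\rm loc}$ convergence of the density then give $\mathcal{E}^1(\psi)\le\liminf \mathcal{E}^1(\psi_n)=E^1_{PT}$; standard arguments show $\|\psi\|=1$ and $\psi$ is a minimizer.

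For part (c), I would exploit the operator inequality established in Section~\ref{operineqnovan}: for any density $\sigma$ one has
\begin{equation*}
\mathcal{E}^N(\Psi)\le \langle\Psi,H_\sigma\Psi\rangle,\qquad H_\sigma:=\sum_{k=1}^N\bigl(D_{A,x_k}^2-V_\sigma(x_k)\bigr)+\sum_{i<j}\frac{1}{|x_i-x_j|}+D(\sigma),
\end{equation*}
with equality when $\sigma=\rho_\Psi$. Given a minimizing sequence $\Psi_n$ with densities $\rho_n$, the binding inequality~\eqref{binding} together with the HVZ theorem for the linear Hamiltonian $H_{\rho_n}$ (whose cluster decompositions reproduce exactly the right-hand side of \eqref{binding} by the ``equality'' part of the operator inequality applied to minimizers of $\mathcal{E}^k$, $\mathcal{E}^{N-k}$) produces a spectral gap of size at least $\Delta E^{N,1}_{PT}+o(1)$ between the ground state energy of $H_{\rho_n}$ and its essential spectrum, uniformly in $n$ along a subsequence. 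This uniform gap yields uniform exponential localization of $\Psi_n$ up to magnetic translations (Combes--Thomas), hence tightness; extracting a weak limit $\Psi$ and using weak lower semicontinuity of the kinetic and Coulomb parts and strong convergence of $\rho_n\to \rho_\Psi$ (from the uniform decay) one concludes that $\Psi$ is a minimizer, which simultaneously proves the exponential decay claim of Theorem~\ref{main2}.

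Part (b) will be the main obstacle, as noted by the authors. Given minimizers $\Psi^k$ of $\mathcal{E}^k$ and $\Psi^{N-k}$ of $\mathcal{E}^{N-k}$, the natural trial state is
\begin{equation*}
\Psi_R:=\mathcal{A}\bigl(\Psi^k\otimes T_{Re}\Psi^{N-k}\bigr),
\end{equation*}
where $\mathcal{A}$ is the fermionic antisymmetrizer, $T_{Re}$ is magnetic translation by a large vector $Re$ chosen transverse to $B$ (so that the phase does not interfere), and $e$ is a unit vector. I would compute $\mathcal{E}^N(\Psi_R)$ and show it equals $E^k_{PT}+E^{N-k}_{PT}$ plus an $R$-dependent correction that is strictly negative for large $R$. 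The direct interaction terms contribute $(kN-k)/R+O(R^{-\infty})$ from the Coulomb repulsion and $-(kN-k)/R+$ (van der Waals) from the cross term in $D(\rho_{\Psi_R})$, leaving a residual attractive dipole-dipole correction of order $-c\,R^{-6}$ coming from the quadratic correction $-D(\rho^k,\rho^{N-k})+D(\rho_{\Psi_R})$. The essential difficulty compared to \cite{GHW} (where $k=1$, $N-k=1$) is handling the antisymmetrization: the cross terms produced by $\mathcal{A}$ differ from the product case by exchange terms whose contribution must be shown to decay faster than $R^{-6}$; this uses the exponential decay of the constituent minimizers proved in part (c), applied to $\mathcal{E}^k$ and $\mathcal{E}^{N-k}$ (legitimate by the inductive hypothesis together with part (c) at the previous levels). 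Once the exchange terms are controlled, standard expansion of the Coulomb and $D(\cdot)$ kernels to second order in $R^{-1}$ isolates the leading $-cR^{-6}$ van der Waals attraction, yielding $\mathcal{E}^N(\Psi_R)<E^k_{PT}+E^{N-k}_{PT}$ for $R$ large and hence the strict binding inequality \eqref{binding}.
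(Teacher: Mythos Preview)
Your outlines for parts (a) and (c) are close to the paper's approach, with two caveats. In (a) the contradiction must be with $E^1_{PT}<|B|$, not $E^1_{PT}<0$: for a linear $A$ the kinetic form satisfies $\langle\psi,D_A^2\psi\rangle\ge |B|$, and for large $|B|$ there is no reason why $E^1_{PT}$ should be negative. The strict inequality $E^1_{PT}<|B|$ is the input from \cite{GHW}. In (c) the phrase ``HVZ for $H_{\rho_n}$'' hides the main technical content. The HVZ thresholds of $H_{\rho_n}$ with a \emph{fixed} density $\rho_n$ do not a priori equal $E^k_{PT}+E^{N-k}_{PT}$; rather, the paper proves an operator inequality of the form $H_{\rho_k}^N\ge E^N_{PT}+d(1-J_\eps)-O(\sqrt{\eps})$ with $J_\eps$ compactly supported, by first applying concentration--compactness to the sequence of densities $\rho_k$ and then building an IMS partition of unity adapted to the splitting $\rho_k\approx\rho_{k,1}+\rho_{k,2}$. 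The key bound $H_\sigma^m\ge E^m_{PT}$ for \emph{every} density $\sigma$ is what converts each localized piece into the corresponding $E^m_{PT}$.

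The genuine gap is in part (b). Your trial state $\Psi_R=\mathcal{A}(\Psi^k\otimes T_{Re}\Psi^{N-k})$ with the \emph{unperturbed} minimizers gives exactly $\mathcal{E}^N(\Psi_R)=E^k_{PT}+E^{N-k}_{PT}$ up to exponentially small exchange errors. Indeed, once the two factors have disjoint supports one has $\rho_{\Psi_R}=\rho_{\Psi^k}+\rho_{T_{Re}\Psi^{N-k}}$, and then the cross Coulomb repulsion $\sum_{i\le k<j}\langle 1/|x_i-x_j|\rangle$ equals $2D(\rho_{\Psi^k},\rho_{T_{Re}\Psi^{N-k}})$ and cancels the cross term in $-D(\rho_{\Psi_R})$ \emph{exactly}. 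There is no residual $-cR^{-6}$. The van der Waals gain in the Lieb--Thirring/Lewin approach comes only after perturbing the two factors by dipole excitations and optimizing; and as the paper emphasizes, isolating that gain uses spherical averaging and Newton's theorem, which are unavailable here because the magnetic field breaks rotation invariance.

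The paper avoids this entirely. It argues by contradiction: assume $E^N_{PT}=E^k_{PT}+E^{N-k}_{PT}$, cut off and separate the minimizers, and observe that the antisymmetrized product $\Phi$ is then an approximate minimizer with $\mathcal{E}^N(\Phi)=E^N_{PT}+O(e^{-cR})$. A Cauchy--Schwarz argument against the positive operator $H^N_{\rho_\Phi}-E^N_{PT}$ upgrades this to $\|(H^N_{\rho_\Phi}-\lambda_R)\Phi\|=O(e^{-cR})$. Unwinding the antisymmetrization and subtracting the Euler--Lagrange equations of the two factors, one is left with $\|(J_k+\text{const})\,\phi_k\otimes T_y\phi_{N-k}\|=O(e^{-cR})$, where $J_k$ is the inter-cluster Coulomb interaction. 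But a multipole expansion shows this quantity is bounded below by $C/R^3$, a contradiction. No $R^{-6}$ computation and no rotational symmetry are needed; the argument hinges on the incompatibility of exponential accuracy with the polynomial decay of Coulomb tails.
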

Part (a) of Theorem \ref{main} is known from \cite{GHW} but we shall
reprove it as a part of the proof of part (c). Part (b) is proved
in Section~\ref{bindinginductionproof} by variational arguments.
Sections \ref{operineqnovan} and \ref{minexi} are devoted to the
proof of (c). The remainder of the present section describes the
difficulties met in the proof of (c) and collects our tools for
dealing with them.

Any proof of (c) must deal with the following translation invariance
of $\mathcal{E}^N$: Let $A: \mathbb{R}^3 \rightarrow \mathbb{R}^3$
be linear, $h \in \mathbb{R}^3$, and $\Psi \in \mathcal{H}_N$. If
$T_h\Psi$ is defined by
\begin{equation}\label{trans}
(T_h \Psi)( x_1,\sigma_1,\ldots,x_N, \sigma_N) =\prod_{j=1}^N e^{i
A(h) \cdot x_j} \Psi(x_1+h,\sigma_1,\ldots,x_N+h, \sigma_N),
\end{equation}
then $\rho_{T_h \Psi}(x)= \rho_{\Psi}(x+h)$ and
\begin{equation}\label{trans1}
  \mathcal{E}^{N} (T_h \Psi)= \mathcal{E}^{N} (\Psi).
\end{equation}
Due to \eqref{trans} and \eqref{trans1} a minimizing sequence of
$\mathcal{E}^N$  may converge to the zero function weakly. On the
other hand in view of Lemma \ref{minbound}, a weak limit
$\Psi\in\mathcal{H}_N$ with $\|\Psi\|=1$ is, indeed, a minimizer of
$\mathcal{E}^N$. Our task is thus to find a minimizing sequence of
$\mathcal{E}^N$ that does not suffer any loss of norm in the limit.
One of our tools to this end is the following form of the
Concentration Compactness Principle \cite{Li}:

\begin{proposition}\label{lions}
Let $(\rho_k)_{k \geq 1}$ be a sequence of nonnegative functions in
$L^1(\mathbb{R}^3)$ with $\int \rho_k=N$. Then there exists a
subsequence of $(\rho_k)$, denoted by $(\rho_k)$ as well, such that
one of the following holds:
\begin{itemize}
\item[(i)](Vanishing) For all $R>0$ we have that $$\lim_{k \rightarrow
\infty} \sup_{y \in \mathbb{R}^3} \int_{B(y,R)} \rho_{k} = 0.$$

\item[(ii)](Dichotomy or compactness) There exists $\lambda \in (0,N]$
such that for all $\varepsilon>0$ there exist $R_{\varepsilon}>0$, a
sequence $y_k=y_k(\varepsilon)$ in  $\mathbb{R}^3$, and a sequence
$P_k=P_k(\eps)$ in $\mathbb{R}$ with $P_k \rightarrow \infty$ as
$k\to \infty$, such that the sequences of functions
\footnote{$\chi_A$ denotes the characteristic function of the set
$A$ and $B(y,R)$ is the ball of radius $R$ centered at $y$ in
$\mathbb{R}^3$.}
\begin{align*}
\rho_{k,1} &:=\rho_{k} \chi_{B(y_k(\varepsilon),R_{\varepsilon})}\\
\rho_{k,2} &:=\rho_{k} \chi_{B(y_k(\varepsilon),P_k(\e))^C}
\end{align*}
 satisfy for $k \geq k_0(\varepsilon)$ the bounds
\begin{equation}\label{dic1}
\|\rho_{k}-\rho_{k,1}-\rho_{k,2}\|_{L^1} \leq \varepsilon,
\end{equation}
\begin{equation}\label{dic2}
|\|\rho_{k,1}\|_{L^1}-\lambda| \leq \varepsilon, \quad
|\|\rho_{k,2}\|_{L^1}-(N-\lambda)| \leq \varepsilon
\end{equation}
and
\begin{equation}\label{distchik}
\text{dist} (\supp \rho_{k,1}, \supp \rho_{k,2}) \rightarrow
\infty,\qquad(k\to\infty).
\end{equation}
If $m$ is a positive integer such that $m \lambda>N$, then after
passing to a subsequence once more, there exists
$\e_1,...,\e_{m-1}>0$, and $\delta>0$ such that
\begin{equation}\label{deltaoverlap}
 \liminf_{k \rightarrow \infty} \int_{\cup_{j=1}^{m-1}
B_{k,\e_j}} \rho_{k,1} \geq \delta
\end{equation}
for all $\e>0$ small enough. Here $B_{k,\e}= B(y_k(\e), R_\e)$.
\end{itemize}
\end{proposition}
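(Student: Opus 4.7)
My plan is to apply the L\'evy concentration technique to the sequence $(\rho_k)$. Introduce the nondecreasing functions
\[
Q_k(R) := \sup_{y \in \R^3} \int_{B(y,R)} \rho_k, \qquad R > 0,
\]
all bounded above by $N$. By Helly's selection theorem applied on a countable dense set of radii together with a diagonal argument, I extract a subsequence along which $Q_k(R) \to Q(R)$ for every $R > 0$, with $Q$ nondecreasing. Set $\lambda := \lim_{R \to \infty} Q(R) \in [0,N]$. The case $\lambda = 0$ is exactly vanishing (i).

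In the remaining case $\lambda > 0$, given $\eps > 0$ I choose $R_\eps$ with $Q(R_\eps) > \lambda - \eps/2$ and pick $y_k(\eps)$ approximating the supremum defining $Q_k(R_\eps)$, so that $\int_{B(y_k(\eps),R_\eps)} \rho_k > \lambda - \eps$ for large $k$. For the outer radius I run a diagonal argument: since $Q_k(nR_\eps) \to Q(nR_\eps) \leq \lambda$ as $k\to\infty$ for every integer $n$, I pick $k_1 < k_2 < \cdots$ so that $\int_{B(y_k(\eps), nR_\eps)} \rho_k \leq \lambda + \eps$ on the block $k_n \leq k < k_{n+1}$, and set $P_k(\eps) := nR_\eps$ there. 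Then $P_k(\eps) \to \infty$ and the annulus $R_\eps < |x - y_k(\eps)| < P_k(\eps)$ carries $\rho_k$-mass at most $2\eps$; the estimates \eqref{dic1}--\eqref{distchik} for $\rho_{k,1}$ and $\rho_{k,2}$ defined as in the statement follow at once.

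For the overlap assertion \eqref{deltaoverlap} under $m\lambda > N$, I iterate the previous construction. Inductively attempt to select scales $\eps_1 \geq \eps_2 \geq \cdots$ and centers $y_k(\eps_j)$ with $\int_{B(y_k(\eps_j), R_{\eps_j})} \rho_k \geq \lambda - \eps_j$ and $|y_k(\eps_i) - y_k(\eps_j)| \to \infty$ for $i < j$, passing to a further subsequence at each step. Let $j_0 \leq m-1$ be the number of chunks successfully extracted (an $m$-th such chunk is impossible because $m(\lambda - \max_j \eps_j) > N$ would make the pairwise-far balls eventually disjoint with combined $\rho_k$-mass exceeding $N$). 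In the saturated case $j_0 = m-1$, asymptotic pairwise disjointness gives
\[
\int_{\bigcup_{j=1}^{m-1} B_{k,\eps_j}} \rho_k \geq (m-1)\lambda - \sum_{j=1}^{m-1} \eps_j + o(1),
\]
so that for every sufficiently small $\eps > 0$,
\[
\int_{\bigcup_{j=1}^{m-1} B_{k,\eps_j}} \rho_{k,1} \geq \|\rho_{k,1}\|_{L^1} - \Bigl( N - (m-1)\lambda + \sum_{j=1}^{m-1} \eps_j \Bigr) \geq m\lambda - N - \eps - \sum_{j=1}^{m-1} \eps_j + o(1) \geq \delta
\]
for some fixed $\delta > 0$. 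If instead the induction terminates prematurely at $j_0 < m-1$, the remaining labels are filled by repetition; the very failure of further extraction forces any new $y_k(\eps)$ (for small $\eps$) to remain within bounded distance of some $y_k(\eps_i)$, $i \leq j_0$, so that the larger ball $B(y_k(\eps), R_\eps)$ asymptotically contains $B(y_k(\eps_i), R_{\eps_i})$ and the overlap is at least $\lambda - \eps_i$.

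The main technical obstacle is this iterative extraction together with its termination analysis: at each step one passes to a further subsequence while keeping track of multiple scales $\eps_j$, and one must combine the mass-counting argument in the saturated case with a separate geometric-proximity argument in the case of premature termination in order to produce a single uniform constant $\delta$.
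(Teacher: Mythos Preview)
Your treatment of the dichotomy/compactness alternative via the L\'evy concentration function is the standard one; the paper does not prove this part either, simply citing it as a variation of Lions' principle.

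For the overlap assertion \eqref{deltaoverlap} your route is genuinely different from the paper's. The paper argues by contradiction: fixing $\eps_1,\delta_1>0$ with $m(\lambda-\eps_1)>N+m\delta_1$ and assuming \eqref{deltaoverlap} fails for every choice of parameters, it inductively produces $\eps_1>\eps_2>\cdots>\eps_m$ (at step $l$ the negation is applied to the list $\eps_1,\dots,\eps_{l-1}$ padded by repetition, with $\eps_l$ playing the role of the free $\eps$) and a subsequence along which $\int_{\cup_{i<l}B_{k,\eps_i}\cap B_{k,\eps_l}}\rho_k\le \delta_1/(l-1)$ for each $l$. A single use of the elementary inequality $\chi_{\cup_j B_{k,\eps_j}}\ge \sum_j\chi_{B_{k,\eps_j}}-\sum_{i<j}\chi_{B_{k,\eps_i}\cap B_{k,\eps_j}}$ then forces $\int\rho_k>N$, a contradiction. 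No case distinction is needed.

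Your constructive case analysis has a gap in the premature-termination branch. Termination at $j_0<m-1$ does give, for each small $\eps$, a bound $\min_{i\le j_0}|y_k(\eps)-y_k(\eps_i)|\le M(\eps)$ uniformly in $k$; but $M(\eps)$ is in no way controlled relative to $R_\eps$. The containment $B(y_k(\eps_i),R_{\eps_i})\subset B(y_k(\eps),R_\eps)$ you invoke requires $R_\eps\ge M(\eps)+R_{\eps_i}$, and nothing in the construction forces this as $\eps\to 0$, so the asserted overlap $\ge \lambda-\eps_i$ is unjustified. The argument can be repaired by reversing the geometry: from $|y_k(\eps)-y_k(\eps_i)|\le M(\eps)$ and $P_k(\eps_i)\to\infty$ one gets $B(y_k(\eps),R_\eps)\subset B(y_k(\eps_i),P_k(\eps_i))$ for $k$ large, and then the annulus bound $\int_{B(y_k(\eps_i),P_k(\eps_i))\setminus B_{k,\eps_i}}\rho_k\le 2\eps_i$ coming from \eqref{dic2} yields $\int_{B_{k,\eps_i}}\rho_{k,1}\ge \lambda-\eps-2\eps_i$. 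This fixes the branch, but the resulting argument is noticeably longer than the paper's inclusion--exclusion contradiction.
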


\begin{proof}
We shall only prove the last part of (ii). The rest is a variation
of the Concentration Compactness Principle. Let $\e_1, \delta_1 >0$
be such that $m(\lambda-\e_1) > N+ m \delta_1$. Assuming that the
lemma is wrong we inductively construct $\e_1> \e_2>...> \e_m>0$ and
a subsequence of $\rho_k$ denoted by $\rho_k$ as well, such that
\begin{equation*}
 \int_{ \cup_{i=1}^{l-1} B_{k,\e_i} \cap B_{k, \e_l}} \rho_k \leq \frac{\delta_1}{l-1}, \quad \forall l=2,...,m.
\end{equation*}
Using this together with \eqref{dic2} and the inequality
\begin{equation*}
\chi_{\cup_{j=1}^m B_{k, \e_j}} \geq \sum_{j=1}^m \chi_{B_{k,\e_j}}-
\sum_{i<j \leq m} \chi_{B_{k,\e_i} \cap B_{k, \e_j}},
\end{equation*}
we obtain that
\begin{equation*}
\liminf \int_{\cup_{j=1}^m B_{k, \e_j}} \rho_k \geq \sum_{j=1}^m
(\lambda-\e_j) -(m-1)\delta_1 \geq m(\lambda-\e_1)-(m-1) \delta_1>N,
\end{equation*}
where the last inequality follows by the choice of $\e_1$ and
$\delta_1$. This is in contradiction with $\int \rho_k=N$, which
concludes the proof of the lemma.
\end{proof}

The following lemma is the reason for the new part
\eqref{deltaoverlap} in the above version of the Concentration
Compactness Principle.

\begin{lemma}\label{Dgaplemma}
In the case (ii) of Proposition~\ref{lions}, if $(\rho_k)$ is chosen
to satisfy \eqref{deltaoverlap}, then
\begin{equation}\label{Dgap}
   \liminf_{k\to\infty} D(\rho_{k,1})>0
\end{equation}
uniformly for small enough $\e$. (Recall that $\rho_{k,1}$ depends
on $\vep$.)
\end{lemma}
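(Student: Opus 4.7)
The plan is to exploit the fact that the radii $R_{\e_j}$ of the balls appearing in \eqref{deltaoverlap} are \emph{fixed} (they depend only on the chosen $\e_j$, not on $k$ or the small parameter $\e$), so that a pigeonhole argument forces a definite fraction of the mass of $\rho_{k,1}$ to concentrate in a ball of bounded radius.

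First, I would observe that for each $\e$ small and each $k$ large, \eqref{deltaoverlap} gives
\[
\int_{\cup_{j=1}^{m-1} B_{k,\e_j}} \rho_{k,1}\ \geq\ \tfrac{\delta}{2},
\]
say (replacing $\delta$ by $\delta/2$ to turn the $\liminf$ into an eventual inequality). By pigeonhole, there exists an index $j^\star=j^\star(k,\e)\in\{1,\dots,m-1\}$ such that
\[
\int_{B_{k,\e_{j^\star}}} \rho_{k,1}\ \geq\ \frac{\delta}{2(m-1)}=:\delta'.
\]
Set $\sigma_k:=\rho_{k,1}\chi_{B_{k,\e_{j^\star}}}$, so $0\leq\sigma_k\leq\rho_{k,1}$, $\|\sigma_k\|_{L^1}\geq\delta'$, and $\supp\sigma_k$ sits inside a ball of radius $R_{\e_{j^\star}}$.

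Next I would exploit concentration: since any two points $x,y\in\supp\sigma_k$ satisfy $|x-y|\leq 2R_{\e_{j^\star}}$, the trivial bound gives
\[
D(\sigma_k)\ =\ \frac12\int\!\!\int\frac{\sigma_k(x)\sigma_k(y)}{|x-y|}\,dx\,dy\ \geq\ \frac{1}{4R_{\e_{j^\star}}}\,\|\sigma_k\|_{L^1}^2\ \geq\ \frac{(\delta')^2}{4R_{\e_{j^\star}}}.
\]
Writing $\rho_{k,1}=\sigma_k+(\rho_{k,1}-\sigma_k)$ with both summands nonnegative, the bilinear form $D(\cdot,\cdot)$ has nonnegative cross term, so $D(\rho_{k,1})\geq D(\sigma_k)$.

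Finally, since $j^\star$ ranges over the \emph{finite} index set $\{1,\dots,m-1\}$ whose associated radii $R_{\e_1},\dots,R_{\e_{m-1}}$ are fixed once the $\e_j$ have been chosen, I conclude
\[
\liminf_{k\to\infty} D(\rho_{k,1})\ \geq\ \frac{(\delta')^2}{4\,\max_{1\leq j\leq m-1} R_{\e_j}},
\]
and the right-hand side is a positive constant independent of the small parameter $\e$. I do not foresee a real obstacle here; the only point that needs care is recognizing that the $(m-1)$ balls in \eqref{deltaoverlap} come with $k$- and $\e$-independent radii, which is exactly what makes the new refinement in Proposition~\ref{lions} useful and distinguishes it from the standard concentration-compactness dichotomy.
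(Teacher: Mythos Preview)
Your proof is correct and follows essentially the same route as the paper: both use pigeonhole on the finite union $\cup_{j=1}^{m-1} B_{k,\e_j}$ to find a single ball carrying mass at least a fixed fraction of $\delta$, then bound $D(\rho_{k,1})$ from below by restricting the double integral to that ball and using that its diameter is $2R_{\e_{j^\star}}$, a quantity independent of $k$ and $\e$. The paper does the last step by directly restricting the integration domain rather than introducing $\sigma_k$ and invoking nonnegativity of the cross term, but this is only a cosmetic difference.
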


\begin{proof}
Let $\e$ be small enough for \eqref{deltaoverlap} and fixed. By
\eqref{deltaoverlap} there exists $k_0(\e)$ such that for all $k
\geq k_0(\e)$ we have
\begin{equation}\label{deltaovercons}
\int_{\cup_{j=1}^{m-1} B_{k, \e_j}} \rho_{k,1} \geq \frac{(m-1)
\delta}{m}.
\end{equation}
This means that $\int_{B_{k,\e_j}} \rho_{k,1}\geq \delta/m$ for some
$j \in \{1,...,m-1\}$ depending on $k \geq k_0(\e)$. Since
$\text{diam}(B_{k,\e_j})=2 R_{\e_j}$, we conclude that
\begin{equation}
     D(\rho_{k,1}) \geq \frac{1}{2 R_{\e_j}} \left(\int_{B_{k, \e_j}}
     \rho_{k,1}(x) dx \right)^2 \geq \min_i \frac{\delta^2}{2 R_{\e_i} m^2},
\end{equation}
which proves the lemma.
\end{proof}

We want to construct a minimizing sequence $(\Psi_k)$ that is
concentrated near the origin (after translations). Applying the
Concentration Compactness Principle to $|\Psi_k|^2$ would not work,
because the Pekar-Tomesevich functional is invariant under
translations of the form \eqref{trans}, only, and not under general
translations in $\mathbb{R}^{3N}$. Thus, we apply the Concentration
Compactness Principle to the densities, where dichotomy may mean
various things for the wave function. Rather than trying to exclude
all of them we show directly that non-vanishing of the sequence
$\rho_k$, leads to concentration of a subsequence of $\Psi_k$. This
is possible thanks to an HVZ-type operator inequality for the
Hamiltonians $H_{\rho_k}^N$ defined as follows: for a given
real-valued density  $\sigma  \in L^1(\mathbb{R}^3) \cap
L^{6/5}(\mathbb{R}^3)$
we define
\begin{equation}\label{vs}
   V_\sigma:=\sigma*\frac{1}{|.|}
\end{equation}
and
\begin{equation}\label{HsN}
H_\sigma^N := \sum_{j=1}^N (D_{A,x_j}^2-V_{\sigma}(x_j)) +\sum_{i<j}
\frac{1}{|x_i-x_j|}+ D(\sigma),
\end{equation}
which is well defined by the choice of $\sigma$ (\cite{LL} Corollary
5.10). In all the following this operator is considered defined in
$\mathcal{H}_N$ unless explicitly stated otherwise. The following
lemma, taken from  \cite{FLST}, relates the Pekar-Tomasevich
functional to the linear Hamiltonian \eqref{HsN}:


\begin{lemma}[\textbf{Linearization of the Pekar-Tomasevich functional}]\label{Hsin}
For any density $\sigma\in L^1(\R^3)\cap L^{6/5}(\R^3)$,
\begin{equation}\label{PTineq}
      \mathcal{E}^N(\Psi) \leq \langle \Psi,H_{\sigma}^N\Psi \rangle
\end{equation}
with equality if and only if $\sigma=\rho_{\Psi}$. In particular,
for all $N \in \mathbb{N}$,
\begin{equation}\label{Hamineq}
   H_\sigma^N \geq E^N_{PT}.
\end{equation}
In particular, if $(\Psi_k)$ is a minimizing sequence for $\mathcal{E}^N$
and $(\rho_k)$ is the sequence of the corresponding densities, then
\begin{equation}\label{infsigma}
      \lim_{k \rightarrow \infty} \langle \Psi_k, H_{\rho_k}^N \Psi_k\rangle = E_{PT}^N.
\end{equation}
\end{lemma}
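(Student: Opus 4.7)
The plan is to prove the pointwise identity
\begin{equation*}
\langle \Psi, H_\sigma^N \Psi\rangle - \mathcal{E}^N(\Psi) = D(\sigma - \rho_\Psi),
\end{equation*}
from which all three assertions follow almost immediately. Comparing the definitions \eqref{PT} and \eqref{HsN}, the kinetic terms and the pair Coulomb repulsion cancel, so that the difference reduces to $-\sum_{j=1}^N \langle \Psi, V_\sigma(x_j)\Psi\rangle + D(\sigma) + D(\rho_\Psi)$. The first step is therefore to rewrite the one-body potential sum in terms of the density. Using the definition \eqref{rdef} of $\rho_\Psi$ and Fubini, I would show
\begin{equation*}
\sum_{j=1}^N \langle \Psi, V_\sigma(x_j)\Psi\rangle = \int V_\sigma(x)\, \rho_\Psi(x)\, dx = 2 D(\sigma, \rho_\Psi),
\end{equation*}
which combined with the identity $D(\sigma) - 2D(\sigma, \rho_\Psi) + D(\rho_\Psi) = D(\sigma - \rho_\Psi)$ gives the claim.

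Once this identity is established, the inequality \eqref{PTineq} is immediate since $D(\cdot)$ is nonnegative (it is, up to a factor, the squared $\dot{H}^{-1}(\R^3)$-norm of its argument). For the equality case, the assumption $\sigma, \rho_\Psi \in L^1 \cap L^{6/5}$ makes $D(\sigma - \rho_\Psi)$ a strictly positive definite quadratic form (see e.g. \cite{LL}), so that $D(\sigma-\rho_\Psi)=0$ if and only if $\sigma=\rho_\Psi$. The operator bound \eqref{Hamineq} then follows by taking $\Psi\in\mathcal{Q}_{N,A}$ with $\|\Psi\|=1$ and combining \eqref{PTineq} with the variational definition \eqref{PTgs} of $E_{PT}^N$.

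Finally, if $(\Psi_k)$ is a minimizing sequence with associated densities $\rho_k = \rho_{\Psi_k}$, then the equality case of \eqref{PTineq} with $\sigma = \rho_k$ yields $\langle \Psi_k, H_{\rho_k}^N \Psi_k\rangle = \mathcal{E}^N(\Psi_k)$ for every $k$, and passing to the limit gives \eqref{infsigma}. The only mildly delicate point in the above argument is justifying the manipulations with $V_\sigma$ on the form domain $\mathcal{Q}_{N,A}$; I expect this to be the main technical (but essentially routine) obstacle, and it is handled by the assumption $\sigma\in L^1\cap L^{6/5}$, which via the Hardy-Littlewood-Sobolev inequality ensures $V_\sigma\in L^6 + L^\infty$ and therefore that $V_\sigma(x_j)$ is infinitesimally form-bounded with respect to $D_{A,x_j}^2$ by the diamagnetic inequality.
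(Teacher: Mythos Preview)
Your proof is correct and follows exactly the same route as the paper: both establish the identity $\langle \Psi,H_{\sigma}^N\Psi\rangle-\mathcal{E}^N(\Psi)=D(\sigma-\rho_{\Psi})$ via $\sum_j\langle\Psi,V_\sigma(x_j)\Psi\rangle=2D(\sigma,\rho_\Psi)$, invoke positivity of $D$ (the paper cites the positivity of the Fourier transform of $|\cdot|^{-1}$), and read off \eqref{Hamineq} and \eqref{infsigma} as immediate consequences. Your additional remarks on the equality case and on form-boundedness of $V_\sigma$ are more detailed than the paper's terse argument, but the underlying approach is the same.
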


\begin{proof}
By the definitions of $H_\sigma^N, \mathcal{E}^N$, $V_\sigma$ and
$D$ we have that
\begin{equation*}
\langle \Psi,H_{\sigma}^N
 \Psi \rangle-\mathcal{E}^N(\Psi) =D(\sigma)+D(\rho_{\Psi})-2 D(\rho_\Psi,
 \sigma)=D(\sigma-\rho_{\Psi}) \geq 0,
\end{equation*}
where the last inequality follows from the positivity of the Fourier
transform of $|.|^{-1}$. This proves \eqref{PTineq}. Inequality
\eqref{Hamineq} follows from \eqref{PTineq} and from the definition,
Equation~\eqref{PTgs}, of $E^N_{PT}$. Equation~\eqref{infsigma}
follows from $\langle \Psi_k,H_{\rho_k}^N\Psi_k \rangle =
\mathcal{E}^N(\Psi_k)$ and from the choice of $(\Psi_k)$.
\end{proof}

The main steps in our proof of part (c) of Theorem~\ref{main} are as
follows:

\medskip
\textbf{Step 1} is to exclude vanishing for the sequence of the
densities $(\rho_k)$ associated with a minimizing sequence
$(\Psi_k)$. To this end we prove that vanishing implies that
$D(\rho_k) \rightarrow 0$ which is easily seen to be in
contradiction with
 $\mathcal{E}^N(\Psi_k) \rightarrow E_{PT}^N$.

As vanishing has now been excluded, the second alternative of
Proposition \ref{lions} must apply to the densities $(\rho_k)$ of
any minimizing sequence $(\Psi_k)$. Upon the translations $\Psi_k
\rightarrow T_{y_k} \Psi_k$, see \eqref{trans}, we may assume that
some part of the densities $\rho_k$ is concentrated near the origin.

\textbf{Step 2} is the proof of the operator inequality
\begin{equation}\label{opineq}
H_{\rho_k}^N \geq
E_{PT}^N+d(1-J_{\varepsilon})+O(\sqrt{\varepsilon}),
\end{equation}
where $d>0$, $J_\varepsilon$ is compactly supported and $0 \leq
J_{\varepsilon} \leq 1$. The proof of \eqref{opineq} is based on the
properties of $\rho_k$ as described by Proposition 2.2 (ii), on
Lemma~\ref{Dgaplemma}, and on a suitable partition of unity that is
adjusted to the supports of $\rho_{k,1}$ and $\rho_{k,2}$.

\textbf{Step 3} is to show that \eqref{opineq} implies concentration
of $(\Psi_k)$.  This is easily done with the help of \eqref{infsigma} and the fact that $\varepsilon$ in
\eqref{opineq} may be taken arbitrarily small.

\section{Absence of vanishing and the operator inequality}
\label{operineqnovan}

Our goal in this Section is to establish absence of vanishing of the
sequence of the densities $(\rho_k)$ associated with a minimizing
sequence $(\Psi_k)$ and to prove the operator inequality of
Proposition~\ref{IMS}.

\begin{lemma}[\textbf{Absence of vanishing}]\label{novanish}
The sequence of the densities $(\rho_k)$ associated with a
minimizing sequence  $(\Psi_k)$ of $\mathcal{E}^N$ cannot be
vanishing.
\end{lemma}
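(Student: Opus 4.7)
I would argue by contradiction: assume $(\rho_k)$ is vanishing, show $D(\rho_k)\to 0$, and close via an energy mismatch. The mechanism is the identity
\begin{equation*}
\mathcal{E}^N(\Psi_k)+D(\rho_k)=\Big\langle\Psi_k,\Big(\sum_j D_{A,x_j}^2+\sum_{i<j}|x_i-x_j|^{-1}\Big)\Psi_k\Big\rangle\ge N|B|,
\end{equation*}
which uses $D_{A,x_j}^2\ge|B|$ and positivity of the Coulomb repulsion. If $D(\rho_k)\to 0$ then the left-hand side converges to $E^N_{PT}$, forcing $E^N_{PT}\ge N|B|$, in contradiction with the strict upper bound $E^N_{PT}<N|B|$ established at the end.

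To derive $D(\rho_k)\to 0$ I first control the magnetic kinetic energy $T_\Psi:=\sum_{j=1}^N\|D_{A,x_j}\Psi\|^2$ along the minimizing sequence. The diamagnetic and Hoffmann-Ostenhof inequalities give $\|\nabla\sqrt{\rho_\Psi}\|_2^2\le T_\Psi$; Sobolev's embedding yields $\|\sqrt{\rho_\Psi}\|_6\le CT_\Psi^{1/2}$; interpolating with $\|\sqrt{\rho_\Psi}\|_2^2=N$ produces $\|\rho_\Psi\|_{6/5}\le C(N)T_\Psi^{1/4}$, and the Hardy-Littlewood-Sobolev bound $D(\rho)\le C\|\rho\|_{6/5}^2$ gives the coercivity $\mathcal{E}^N(\Psi)\ge\tfrac12 T_\Psi-C'(N)$. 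Hence $(T_k)$ is uniformly bounded and $(\rho_k)$ is uniformly bounded in $L^1\cap L^3(\R^3)$. Under the vanishing hypothesis $\sup_y\int_{B(y,R)}\rho_k\to 0$ the standard Lions vanishing lemma yields $\rho_k\to 0$ in $L^{6/5}(\R^3)$, and HLS then gives $D(\rho_k)\to 0$.

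For the strict upper bound $E^N_{PT}<N|B|$ I would use a trial state consisting of $N$ well-separated (and, in the fermionic case, antisymmetrized) translates of a one-polaron almost-minimizer $\psi^\ast$. A direct computation shows that for a factored state $\Psi=\bigotimes_j\psi_j$ the $D$- and Coulomb cross-terms cancel exactly, so $\mathcal{E}^N(\Psi)=\sum_j\mathcal{E}^1(\psi_j)$, and the spatial separation renders the antisymmetrization corrections negligible. The question reduces to $E^1_{PT}<|B|$, which is the magnetic-polaron existence result of \cite{GHW} (or follows directly by testing $\mathcal{E}^1$ on a lowest-Landau-level state times a wide Gaussian along $B$, whose $-D$-gain dominates the kinetic cost).

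The hard point I anticipate is producing $L^{6/5}$-vanishing of $\rho_k$ from local $L^1$-vanishing. It relies essentially on the uniform $L^3$-bound on $\rho_k$ supplied by Hoffmann-Ostenhof and Sobolev; without the prior kinetic-energy control the Lions vanishing step is not available, and $D(\rho_k)\to 0$ could not be concluded.
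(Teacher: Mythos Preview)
Your argument is correct and follows the same global architecture as the paper's proof: assume vanishing, deduce $D(\rho_k)\to 0$, and obtain a contradiction with $E_{PT}^N<N|B|$ via $E_{PT}^N\le NE_{PT}^1$ and the one-polaron result $E_{PT}^1<|B|$ of \cite{GHW}. The difference lies entirely in how you obtain $D(\rho_k)\to 0$. The paper proceeds by a direct near/far splitting of the Coulomb integral: the far part is bounded by $N^2/r$, and the near part is controlled pointwise via Cauchy--Schwarz and Hardy's inequality,
\[
\int_{|x-y|\le r}\frac{\rho_k(y)}{|x-y|}\,dy\le\Big(\int_{B(x,r)}\rho_k\Big)^{1/2}\Big(\int\frac{\rho_k(y)}{|x-y|^2}\,dy\Big)^{1/2},
\]
so the vanishing hypothesis kills the first factor while the magnetic Hardy bound controls the second uniformly. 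Your route is more functional-analytic: Hoffmann--Ostenhof plus diamagnetic gives $\sqrt{\rho_k}$ bounded in $H^1$, Lions' vanishing lemma then forces $\sqrt{\rho_k}\to 0$ in $L^{12/5}$, and HLS closes the argument. Both are valid; the paper's version is more elementary and self-contained, while yours invokes heavier but standard machinery. One point of care: the Lions vanishing lemma genuinely requires the $H^1$ bound on $\sqrt{\rho_k}$, not merely the $L^1\cap L^3$ bound on $\rho_k$ that you emphasize---an $L^p$ bound alone does not suffice (spread-out bump counterexamples). You do have the $H^1$ bound from Hoffmann--Ostenhof, so the argument goes through, but the lemma should be applied to $\sqrt{\rho_k}$ rather than to $\rho_k$ directly.
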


\begin{proof}
We shall derive a contradiction from the assumptions that $(\Psi_k)$
is minimizing and that $(\rho_k)$ is vanishing at the same time. The
vanishing of $(\rho_k)$ implies that
\begin{equation}\label{v1}
      \lim_{k\to\infty} D(\rho_k) = 0,
\end{equation}
as we will prove shortly. By \eqref{PT} and \eqref{v1} we have that
\begin{equation}\label{v2}
    \varliminf_{k\to\infty}\mathcal{E}^N(\Psi_k)  \geq N\inf\sigma(D_A^2) \geq N |B|.
\end{equation}
On the other hand $ E_{PT}^N \leq NE_{PT}^1$ by general principles
and $E_{PT}^1 < |B|$, by \cite{GHW}.
It follows that
\begin{equation}\label{EPTbound}
E_{PT}^N< N |B|,
\end{equation}
which we combine with \eqref{v2} to conclude that the sequence
$(\Psi_k)$ is not minimizing in contradiction to our assumption.

We now turn to the proof of \eqref{v1}. From  $\|\rho_k\|_{L^1}=N$
it follows that, for any $r>0$,
\begin{equation}\label{v5}
    D(\rho_k) \leq \int_{|x-y| \leq r} \frac{\rho_k(x) \rho_k(y)}{|x-y|} dx dy +\frac{N^2}{r}
\end{equation}
and
\begin{equation}\label{v3}
   \int_{|x-y| \leq r} \frac{\rho_k(x) \rho_k(y)}{|x-y|} dx dy \leq N
   \sup_{x \in \mathbb{R}^3}\int_{|x-y| \leq r} \frac{\rho_k(y)}{|x-y|}\,dy.
\end{equation}
For each $x \in \mathbb{R}^3$, by Cauchy-Schwarz,
\begin{equation}\label{v4}
     \int_{|x-y| \leq r} \frac{\rho_k(y)}{|x-y|} dy \leq \left(\int_{|x-y|\leq r} \rho_k(y)dy\right)^{1/2}
\left(\int_{|x-y|\leq r} \frac{\rho_k(y)}{|x-y|^2} dy\right)^{1/2}.
\end{equation}
On the right hand side of \eqref{v4}, the first factor vanishes
uniformly in $x$ in the limit $k\to\infty$, by the assumption that
$(\rho_k)$ is vanishing. The second factor is bounded uniformly in
$x$ because of Lemma \ref{minbound} and the estimate
\begin{equation}
\int \frac{\rho_k(y)}{|x-y|^2} dy = \sum_{j=1}^N\int
   \frac{|\Psi_k(\posp_1,\ldots,\posp_N)|^2}{|\po-\po_j|^2}
   d\posp_1\ldots d\posp_N \leq 4 \|\Psi_k\|_{\mathcal{Q}_{N,A}}^2.\label{Hardiamcons}
\end{equation}
Here we used the Hardy and diamagnetic inequalities. As we have now
shown that \eqref{v4} vanishes uniformly in $x$ in the limit
$k\to\infty$, we conclude, combining \eqref{v5}-\eqref{v4}, that
$D(\rho_k)\to 0$ as $k\to\infty$ because $r>0$ may be chosen
arbitrarily large in \eqref{v5}.
\end{proof}


\begin{proposition}\label{IMS}
Suppose that \eqref{binding} holds and let $(\Psi_k)$ be a
minimizing sequence whose densities $\rho_k=\rho_{\Psi_k}$ have the
properties of Proposition~\ref{lions} (ii).  Then there exists a
subsequence of $\rho_k$, denoted by $\rho_k$ as well, and a positive
number $d>0$ such that for all $\varepsilon>0$ small enough there
exists a function $J_\eps\in C_0^{\infty}(\R^{3N};[0,1])$ symmetric
with
 respect to exchange of particle coordinates, such that for all $k \geq k_0(\eps)$
\begin{equation}\label{Hbound}
     H_{\rho_k}^N \geq E_{PT}^N+d(1-\tau_{y_k}J_\varepsilon)-N
     (2\sqrt{\varepsilon} C+ \varepsilon N )-2^N( \varepsilon N)^2,
\end{equation}
where $y_k=y_k(\e)$ is given by Proposition \ref{lions} (ii),
$\tau_{y_k}
J_{\varepsilon}(x_1,...,x_N):=J_{\varepsilon}(x_1-y_k,...,x_N-y_k)$
and $C:=2 \sup \|\Psi_{k}\|_{\cQ_{N,A}}<\infty$ (see Lemma
\ref{minbound}). If the sequence $(\rho_k)$ is concentrated, i.e. if
$\lambda=N$ in Proposition \ref{lions} (ii), then we may choose
$d=\Delta E^N :=\min\{E_{PT}^k+E_{PT}^{N-k}\mid k=1,...,N-1\}-E_{PT}^N$.
\end{proposition}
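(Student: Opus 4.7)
The approach is an IMS-type localization in $\R^{3N}$ whose partition of unity is dictated by the dichotomy of $(\rho_k)$: for each subset $S\subseteq\{1,\ldots,N\}$ we regard particles indexed by $S$ as localized near $y_k$ (in the support of $\rho_{k,1}$) and particles indexed by $S^c$ as localized far from $y_k$ (in the support of $\rho_{k,2}$). Three qualitatively distinct cases arise. (i) $S=\{1,\ldots,N\}$: Lemma \ref{Hsin} directly yields $H_{\rho_k}^N\ge E_{PT}^N$; no gap is produced, consistent with $1-J_\varepsilon$ vanishing there. (ii) $\emptyset\neq S\subsetneq\{1,\ldots,N\}$: on the support of the associated cutoff the Hamiltonian approximately decouples into an $|S|$-particle and an $(N-|S|)$-particle block acting on disjoint clusters, and two applications of Lemma \ref{Hsin} together with the binding inequality \eqref{binding} yield $E_{PT}^{|S|}+E_{PT}^{N-|S|}\ge E_{PT}^N+\Delta E^N$. (iii) $S=\emptyset$: the decoupling alone produces no gap, but Lemma \ref{Dgaplemma} delivers $D(\rho_{k,1})\ge d_0>0$ uniformly in $k$, which supplies a gap $d_0$ via the self-energy.

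After translating by $y_k$, pick $\chi_1,\chi_2\in C^\infty(\R^3;[0,1])$ with $\chi_1^2+\chi_2^2\equiv 1$, $\chi_1$ compactly supported and identically $1$ on a ball strictly containing $\supp\rho_{k,1}$ (with a buffer), and $\chi_2\equiv 1$ on an outer region containing $\supp\rho_{k,2}$. Set
$$\Phi_S(x_1,\ldots,x_N):=\prod_{i\in S}\chi_1(x_i)\prod_{j\notin S}\chi_2(x_j),\qquad S\subseteq\{1,\ldots,N\},$$
so that $\sum_S\Phi_S^2\equiv 1$, and define $J_\varepsilon:=\prod_{i=1}^N\chi_1(x_i)^2=\Phi_{\{1,\ldots,N\}}^2$, which is smooth, compactly supported, symmetric, and takes values in $[0,1]$. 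The multi-particle magnetic IMS identity
$$\sum_{j=1}^N D_{A,x_j}^2 \;=\; \sum_S\Phi_S\Bigl(\sum_{j=1}^N D_{A,x_j}^2\Bigr)\Phi_S \;-\; \sum_S|\nabla\Phi_S|^2$$
contributes a localization error that can be made arbitrarily small as $k\to\infty$, since the gradients of $\chi_1,\chi_2$ live in a spherical shell whose inner and outer radii separate as $k\to\infty$.

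For each $S$, on $\supp\Phi_S$ one approximates $V_{\rho_k}(x_i)\approx V_{\rho_{k,1}}(x_i)$ for $i\in S$ and $V_{\rho_k}(x_i)\approx V_{\rho_{k,2}}(x_i)$ for $i\notin S$, expands $D(\rho_k)=D(\rho_{k,1})+D(\rho_{k,2})+2D(\rho_{k,1},\rho_{k,2})+\text{errors in }r_k$, and bounds the cross Coulomb pairs $|x_i-x_j|^{-1}$ with $i\in S$, $j\notin S$ by a constant times $1/(P_k-R_\varepsilon)\to 0$. The remainder $r_k:=\rho_k-\rho_{k,1}-\rho_{k,2}$, which satisfies $\|r_k\|_1\le\varepsilon$, is treated in form sense via Hardy's and the diamagnetic inequalities — producing the $N\cdot 2\sqrt\varepsilon C$ and $2^N(\varepsilon N)^2$ error contributions — while the cross term $D(\rho_{k,1},\rho_{k,2})$ tends to $0$ by distance of supports. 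Combined with Lemma \ref{Hsin} on each approximate block (plus \eqref{binding} in case (ii) and Lemma \ref{Dgaplemma} in case (iii)), summing $\sum_S\Phi_S(\cdots)\Phi_S$ and using $\sum_{S\neq\{1,\ldots,N\}}\Phi_S^2=1-J_\varepsilon$ yields \eqref{Hbound} with $d:=\min(\Delta E^N,d_0)$.

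The main obstacle is case (iii), where all $N$ particles sit far from $y_k$ simultaneously; a naive decoupling would only yield $E_{PT}^N$ there and destroy the gap. The bound $D(\rho_{k,1})\ge d_0>0$ of Lemma \ref{Dgaplemma}, itself made possible by the strengthened conclusion \eqref{deltaoverlap} in Proposition \ref{lions}, is exactly the missing ingredient. In the concentrated case $\lambda=N$, the density $\rho_{k,2}$ has $L^1$-mass $\le\varepsilon$, so the entire $S=\emptyset$ sector can be absorbed into the error terms; only cases (i) and (ii) remain active, and one recovers the stronger statement $d=\Delta E^N$.
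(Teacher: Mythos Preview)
Your approach is essentially identical to the paper's: the same IMS localization indexed by subsets of $\{1,\ldots,N\}$ (equivalently $\{1,2\}^N$), the same three cases, the same role of Lemma~\ref{Dgaplemma} for the all-outer sector, and the same choice $J_\varepsilon=\Phi_{\{1,\ldots,N\}}^2$. Two small technical corrections are in order. First, the cutoffs $\chi_1,\chi_2$ must be $k$-independent (depending only on $\varepsilon$) so that $J_\varepsilon$ is; the paper takes the transition shell of fixed width $\sim 1/\varepsilon$ starting at $R_\varepsilon$, giving $\|\nabla\chi_i\|_\infty\le\varepsilon$ and hence the localization error $\sum_S|\nabla\Phi_S|^2\le 2^N(\varepsilon N)^2$ --- controlled by $\varepsilon$, not by $k\to\infty$. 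Second, in the concentrated case $\lambda=N$ the sector $S=\emptyset$ does not disappear from the \emph{operator} inequality; the paper handles it by noting that $\|\rho_{k,2}\|_1\le\varepsilon$ forces $V_{\rho_{k,2}}\le C\sqrt\varepsilon$, whence on that sector $H^N_{\rho_{k,2}}\ge N E_{PT}^1-NC\sqrt\varepsilon\ge E_{PT}^1+E_{PT}^{N-1}-NC\sqrt\varepsilon\ge E_{PT}^N+\Delta E^N-NC\sqrt\varepsilon$, which restores the gap $\Delta E^N$ there as well.
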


We fix $\varepsilon>0$ and $(\Psi_k)$ as described in Proposition
\ref{IMS}. Let $(y_k)$ be the corresponding sequence provided by
Proposition \ref{lions} (ii). After the translations $\Psi_k \mapsto
T_{y_k} \Psi_k$ defined by Equation~\eqref{trans} we may assume that
the densities of $(\Psi_k)$ have the properties of Proposition
\ref{lions} (ii) with $y_k=0$. It thus remains to prove Proposition
\ref{IMS} in the case $y_k=0$. As a preparation we will first
establish the following two lemmas.


\begin{lemma}[\textbf{Partition of unity}]\label{partun}
 Let $\varepsilon$ and $\Psi_k$ be as explained above. Let also
 $\rho_k=\rho_{\Psi_k}$ and $\rho_{k,i}$ be as in Proposition
 \ref{lions} (ii). Then there exist $k_0 \geq 1$ and non-negative functions $j_{1}, j_{2}:
\mathbb{R}^3 \rightarrow \mathbb{R}$ with
\begin{equation}\label{prepartunprop}
0 \leq j_1, j_2 \leq 1, \quad j_{1}^2 +j_{2}^2=1, \quad \|\nabla
j_{i}\|_{L^\infty} \leq  \varepsilon, \quad \supp j_1 \subset
B(0,R_\varepsilon+\frac{3}{\varepsilon}),
\end{equation}
 such that for all $k \geq k_0$,
\begin{equation}\label{partundist}
\text{dist}(\supp \rho_{k,i}, \supp j_{3-i}) \geq
\frac{1}{\varepsilon}, \quad i=1,2.
\end{equation}
If $a=(a_1,...,a_N) \in \{1,2\}^N$, then the functions
\begin{equation}\label{partundef}
J_{a}(x_1,...,x_N):= \prod_{j=1}^N j_{{a_j}}(x_j)
\end{equation}
have the following properties:
\begin{equation}\label{partunprop}
0 \leq J_{a} \leq 1, \quad \sum_{a \in \{1,2\}^N} J_{a}^2=1, \quad
\|\nabla J_{a}\|_{L^\infty} \leq  \varepsilon N.
\end{equation}
\end{lemma}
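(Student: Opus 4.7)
The plan is to build $j_1,j_2$ as smooth radial cut-offs on $\mathbb{R}^3$ whose \emph{squares} form a partition of unity, and then to obtain the partition on $\mathbb{R}^{3N}$ by tensorization. After the translation $\Psi_k\mapsto T_{y_k}\Psi_k$, Proposition~\ref{lions}~(ii) reads $\supp\rho_{k,1}\subset B(0,R_\e)$ and $\supp\rho_{k,2}\subset B(0,P_k(\e))^c$ with $P_k(\e)\to\infty$. Since the first support is $k$-independent and the second recedes, I expect one pair $(j_1,j_2)$ to work uniformly for all $k$ large enough that $P_k(\e)\geq R_\e+4/\e$.

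The construction I have in mind is the standard trigonometric IMS cut-off. Pick a smooth non-decreasing $\th:[0,\infty)\to[0,1]$ with $\th=0$ on $[0,R_\e+1/\e]$, $\th=1$ on $[R_\e+3/\e,\infty)$, and $\|\th'\|_\infty\leq 2\e/\pi$ (feasible because the transition interval has length $2/\e$). Set
\[
 j_1(x):=\cos\!\Bigl(\tfrac{\pi}{2}\th(|x|)\Bigr),\qquad j_2(x):=\sin\!\Bigl(\tfrac{\pi}{2}\th(|x|)\Bigr).
\]
All pointwise properties in \eqref{prepartunprop} follow at once: $j_i\in[0,1]$ and $j_1^2+j_2^2=1$ by the Pythagorean identity; $|\nabla j_i|\leq(\pi/2)\|\th'\|_\infty\leq\e$; and $\supp j_1\subset B(0,R_\e+3/\e)$ since $\cos(\pi/2)=0$ once $\th(|x|)=1$. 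The two distance conditions \eqref{partundist} are a direct inspection: $j_2$ vanishes on the $(1/\e)$-neighborhood $\{|x|\leq R_\e+1/\e\}$ of $\supp\rho_{k,1}$, and for $k\geq k_0(\e)$ the ball $B(0,R_\e+3/\e)\supset\supp j_1$ is separated from $\supp\rho_{k,2}$ by $P_k(\e)-R_\e-3/\e\geq 1/\e$.

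For the $N$-particle partition \eqref{partundef}, the key identity
\[
 \sum_{a\in\{1,2\}^N}J_a(x)^2 \;=\; \prod_{j=1}^N\bigl(j_1(x_j)^2+j_2(x_j)^2\bigr) \;=\; 1
\]
drops out by distributing the product. The remaining assertions in \eqref{partunprop} follow from the Leibniz rule $\nabla_{x_j}J_a=(\nabla j_{a_j})(x_j)\prod_{\ell\neq j}j_{a_\ell}(x_\ell)$ together with $0\leq j_i\leq 1$ and $|\nabla j_i|\leq\e$; summing over $j$ by the triangle inequality gives $|\nabla J_a|\leq\e N$.

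The only conceptually non-routine point is insisting on $j_1^2+j_2^2=1$ (rather than $j_1+j_2=1$): this is what later permits the IMS-type decomposition in Proposition~\ref{IMS} to yield an \emph{operator} inequality for $H_{\rho_k}^N$, not merely a form estimate. The $\cos/\sin$ device handles this for free, so no serious obstacle remains --- the entire lemma is geometric bookkeeping whose only non-trivial ingredient is the recession $P_k(\e)\to\infty$, already provided by Proposition~\ref{lions}.
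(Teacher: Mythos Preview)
Your proof is correct and follows essentially the same approach as the paper's. The paper also builds radial cut-offs $j_\ell(x)=f_\ell((|x|-R_\e)\e)$ with the transition occurring on $[R_\e+1/\e,\,R_\e+3/\e]$ and invokes $P_k(\e)\geq R_\e+4/\e$ for large $k$; the only difference is that the paper simply asserts existence of suitable $f_1,f_2\in C^\infty(\R)$ with $f_1^2+f_2^2=1$ and $|f_\ell'|\leq 1$, whereas you give an explicit $\cos/\sin$ realization of them.
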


\begin{proof}
It is an elementary exercise to construct non-negative functions
$f_1,f_2\in C^{\infty}(\R)$ with $\sup_x|f_{\ell}'(x)|\leq 1$,
$f_1^2+f_2^2=1$, $f_1=1$ on $(-\infty,1]$ and $f_2=1$ on
$[3,\infty)$. Let
\begin{equation*}
     j_\ell(x)=f_\ell((|x|-R_\varepsilon) \varepsilon).
\end{equation*}
Using the properties of $f_1,f_2$ and the fact that $P_k(\e)
\geq R_{\eps}+4\eps^{-1}$ for $k$ large enough, see Proposition
\ref{lions} (ii), one easily verifies that $j_1, j_2$ have the
desired properties. \eqref{partunprop} follows from
\eqref{partundef} and the properties of $j_1, j_2$.
\end{proof}


\begin{lemma}\label{Vcontrol}
Let $\varepsilon$ and  $(\Psi_k)$ be as in Lemma \ref{partun}, and
$C:=2 \sup \|\Psi_{k}\|_{\cQ_{N,A}}$ as in Proposition \ref{IMS}. If
$\rho_{k}, \rho_{k,i}$ are given by Proposition~\ref{lions} (ii),
then for $k$ large enough,
\begin{eqnarray}
    V_{\rho_k}-V_{\rho_{k,1}}-V_{\rho_{k,2}} &\leq & \sqrt{\eps}C, \label{dichcons2}\\
   (V_{\rho_k}-V_{\rho_{k,i}}) j_{i}^2 &\leq& (\sqrt{\eps} C+\varepsilon N )j_{i}^2, \quad i=1,2. \label{vrhokiappr}
\end{eqnarray}
\end{lemma}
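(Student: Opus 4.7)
The plan is to reduce both inequalities to a pointwise Cauchy--Schwarz estimate on the Coulomb potential of the small ``leftover'' density $\eta_k := \rho_k - \rho_{k,1} - \rho_{k,2}$, combined with the support separation provided by Lemma~\ref{partun}. Since $\rho_{k,1}$ and $\rho_{k,2}$ are restrictions of the nonnegative density $\rho_k$ to disjoint sets (cf.~\eqref{distchik}), I would first note that $\eta_k \geq 0$ pointwise, while \eqref{dic1} yields $\|\eta_k\|_{L^1} \leq \varepsilon$. By linearity of $\sigma \mapsto V_\sigma = \sigma * |\cdot|^{-1}$ one has $V_{\rho_k} - V_{\rho_{k,1}} - V_{\rho_{k,2}} = V_{\eta_k}$.

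The next step would be a pointwise bound on $V_{\eta_k}$. Applying Cauchy--Schwarz to the factorization $\eta_k(y)/|x-y| = \sqrt{\eta_k(y)}\cdot \sqrt{\eta_k(y)}/|x-y|$ gives
\[
V_{\eta_k}(x) \leq \Big(\textstyle\int \eta_k\Big)^{1/2}\Big(\textstyle\int \frac{\eta_k(y)}{|x-y|^2}\,dy\Big)^{1/2} \leq \sqrt{\varepsilon}\,C,
\]
where in the last step I use $0 \leq \eta_k \leq \rho_k$ together with the pointwise bound \eqref{Hardiamcons} and the definition $C = 2\sup_k \|\Psi_k\|_{\cQ_{N,A}}$. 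This yields \eqref{dichcons2} at once.

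For the second inequality I would decompose $V_{\rho_k} - V_{\rho_{k,i}} = V_{\rho_{k,3-i}} + V_{\eta_k}$. The $V_{\eta_k}$ contribution is already controlled by $\sqrt{\varepsilon}\,C$. For the remaining term I would invoke the separation \eqref{partundist}: on $\supp j_i$, every $y \in \supp \rho_{k,3-i}$ satisfies $|x-y|\geq 1/\varepsilon$, so $V_{\rho_{k,3-i}}(x)\leq \varepsilon\,\|\rho_{k,3-i}\|_{L^1} \leq \varepsilon N$ for $x\in\supp j_i$. Multiplying by the nonnegative $j_i^2$, which vanishes outside $\supp j_i$, gives $V_{\rho_{k,3-i}}\,j_i^2 \leq \varepsilon N\, j_i^2$ on all of $\R^3$; adding this to the $V_{\eta_k}$ bound proves \eqref{vrhokiappr}.

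The only conceptually delicate point is the pointwise bound on $V_{\eta_k}$: the smallness $\|\eta_k\|_{L^1}\leq \varepsilon$ by itself does not control $V_{\eta_k}$, because $|x|^{-1}$ is not locally bounded. What rescues the argument is the uniform pointwise bound on $\int \rho_k(y)/|x-y|^2\,dy$ coming from the Hardy and diamagnetic inequalities \eqref{Hardiamcons}, which allows Cauchy--Schwarz to interpolate between $L^1$ smallness of $\eta_k$ and the Coulomb singularity.
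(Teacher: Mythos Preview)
Your proof is correct and follows essentially the same route as the paper: the same Cauchy--Schwarz estimate combining $\|\eta_k\|_{L^1}\leq\varepsilon$ with the Hardy--diamagnetic bound \eqref{Hardiamcons} to obtain \eqref{dichcons2}, and then the same use of the separation \eqref{partundist} and $\|\rho_{k,3-i}\|_{L^1}\leq N$ for \eqref{vrhokiappr}. The paper's proof is slightly terser but otherwise identical.
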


\begin{proof}
By the definitions of $V_{\rho_k}$,  $V_{\rho_{k,1}}$, and
$V_{\rho_{k,2}}$, we have
$$
V_{\rho_k}-V_{\rho_{k,1}}-V_{\rho_{k,2}}=(\rho_k-\rho_{k,1}-\rho_{k,2})*\frac{1}{|.|},
$$
where $0\leq \rho_k-\rho_{k,1}-\rho_{k,2} \leq  \rho_k$. Hence, by
Cauchy-Schwarz, \eqref{Hardiamcons}, and \eqref{dic1},
\begin{align*}
|(V_{\rho_k}-V_{\rho_{k,1}}-V_{\rho_{k,2}})(x)| &\leq \left(\int
\frac{\rho_k(y)}{|x-y|^2} dy \right)^{1/2} \left(\int(\rho_k - \rho_{k,1}-\rho_{k,2} ) dy \right)^{1/2}\\
    &\leq C\sqrt{\e}.
\end{align*}

 To prove \eqref{vrhokiappr}, by \eqref{dichcons2} it suffices to show that
$V_{\rho_{k,3-i}} j_{i}^2 \leq \varepsilon N j_{i}^2$. This easily
follows from \eqref{partundist} and $\|\rho_{k,3-i}\|_{L^1} \leq
\|\rho_k \|_{L^1} = N$.
\end{proof}


\begin{proof}[\textbf{Proof of Proposition~\ref{IMS}}]
In this proof we shall tacitly assume that $k$ is large enough so
that the statements of the previous lemmas apply. By the IMS
localization formula \cite{CFKS},
\begin{equation}\label{IMSbound1}
H_{\rho_k}^N=\sum_{a \in \{1,2\}^N} J_{a} H_{\rho_k}^N J_{a} -
\sum_{a \in \{1,2\}^N}|\nabla J_{a}|^2.
\end{equation}
We will now estimate the terms $J_{a} H_{\rho_k}^N J_{a}$ from
below.

\underline{1st Case}: $a$ has $n$ ones and $N-n$ twos, $0 < n < N$.
We may assume without loss of generality that $a=(1,...,1,2,...,2)$.
From $\rho_k \geq \rho_{k,1}+\rho_{k,2}$ it follows that
\begin{equation}\label{dichcons1}
D(\rho_k) \geq D(\rho_{k,1})+D(\rho_{k,2}).
\end{equation}
This, together with \eqref{vrhokiappr} and \eqref{Hamineq} implies
that
\begin{align}
J_{a} H_{\rho_k}^N J_{a} &\geq
J_{a}(H_{\rho_{k,1}}^n+H_{\rho_{k,2}}^{N-n})
J_{a}-N(\sqrt{\varepsilon}C+\eps
N )J_{a}^2,\nonumber\\
    &\geq  (E_{PT}^n+E_{PT}^{N-n})J_{a}^2-N(\sqrt{\varepsilon}C+ \varepsilon N )J_{a}^2. \label{IMSbound3}
\end{align}
Note that $H_{\rho_{k,1}}^n$ acts on the coordinates labeled by
$1,\ldots,n$, while $H_{\rho_{k,2}}^{N-n}$ acts on the ones labeled by
 $n+1,\ldots,N$. Moreover, $J_a\mathcal{H}_N\subset \mathcal{H}_n \otimes
\mathcal{H}_{N-n}$ by construction of $J_a$.

\underline{2nd case}: $a=(2,\ldots,2)$, i.e., only twos. By
\eqref{vrhokiappr} and \eqref{dichcons1},
\begin{equation}\label{sec}
J_{a} H_{\rho_k}^N J_{a} \geq J_{a} (D(\rho_{k,1}) +
H_{\rho_{k,2}}^N) J_{a}-N(\sqrt{\eps} C+ \eps N )J_{a}^2.
\end{equation}
By \eqref{Hamineq} we have $H_{\rho_{k,2}}^N \geq E_{PT}^N$ and by
Lemma~\ref{Dgaplemma} there exits a constant $\gamma>0$ such that
\begin{equation}\label{cuni}
     D(\rho_{k,1}) \geq \gamma,\quad\text{for}\ \e\ \text{small enough}.
\end{equation}
It follows that, for $\e$ small enough,
\begin{equation}\label{IMSbound2}
J_{a} H_{\rho_k}^N J_{a} \geq (E_{PT}^N+\gamma)
J_{a}^2-N(\sqrt{\eps} C+ \eps N) J_{a}^2.
\end{equation}

\underline{3rd case}: $a=a_0:=(1,\ldots,1)$. Since $H_{\rho_{k}}^N
\geq E_{PT}^N$, we have
\begin{equation}\label{IMSbound4}
    J_{{a_0}} H_{\rho_k}^N J_{{a_0}} \geq E_{PT}^N J_{{a_0}}^2.
\end{equation}

Combining the results \eqref{IMSbound3}, \eqref{IMSbound2} and
\eqref{IMSbound4} from the three cases above with \eqref{partunprop}
and \eqref{IMSbound1} we obtain \eqref{Hbound} with
$J_\varepsilon=J_{a_0}^2$ and $d=\min\{\gamma, \Delta E^N\}$, which
is positive due to the binding assumption~\eqref{binding}.

In the case $\lambda=N$ we may improve our bound in the second case
to get $d= \Delta E^N$. Indeed
\begin{equation}
H_{\rho_{k,2}}^N \geq \sum_{j=1}^N
\left(D_{A,x_j}^2-V_{\rho_{k,2}}(x_j)\right) \geq NE_{PT}^1-N
C\sqrt{\vep},
\end{equation}
because $D_{A,x_j}^2 \geq E_{PT}^1$ and $V_{\rho_{k,2}}(x) \leq
C\|\rho_{k,2}\|_{L^1}^{1/2} \leq C\sqrt{\eps}$ by the Cauchy-Schwarz,
 Hardy and diamagnetic inequalities. Here we used $\lambda=N$ and
\eqref{dic2}. Since $NE_{PT}^1 \geq E_{PT}^1+E_{PT}^{N-1}$ we
conclude that
$$
    J_{a} H_{\rho_k}^N J_{a} \geq (E_{PT}^N + \Delta E^N) J_{a}^2-N(\sqrt{\e} 2C+ \e N) J_{a}^2,
$$
which we use in place of \eqref{IMSbound2}.
\end{proof}


\section{Existence of a minimizer and exponential decay}\label{minexi}

 In this Section we prove parts (a),(c) of Theorem \ref{main} and then we prove Theorem \ref{main2}. The
 part (b) of Theorem \ref{main} will be proved in the next Section.

\begin{lemma}\label{nodicho}
Assume that \eqref{binding} holds. Then, there exists a minimizing
sequence $(\Phi_k)$ with the following property: for every
$\delta>0$ there exists $P>0$ such that
\begin{equation}\label{concentration}
\liminf_{k \rightarrow \infty} \int_{B(0,P)} | \Phi_{k}|^2 \geq
1-\delta.
\end{equation}
\end{lemma}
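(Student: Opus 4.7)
The plan is to start from a general minimizing sequence $(\Psi_k)$ of $\mathcal{E}^N$, apply Lemma~\ref{novanish} to rule out vanishing of its densities, pass to a subsequence to which Proposition~\ref{lions}(ii) applies, and then use Proposition~\ref{IMS} together with the linearization identity \eqref{infsigma} to upgrade density-level tightness to $L^2$-tightness in $\mathbb{R}^{3N}$. The sequence $(\Phi_k)$ will be obtained by applying a \emph{single} magnetic translation $T_{y_k(\eps_0)}$ (for a carefully fixed $\eps_0$) to $(\Psi_k)$.

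\textbf{Step 1 (tightness of the translated sequence for each $\eps$).} Taking expectation values of Proposition~\ref{IMS} in $\Psi_k$ and combining with \eqref{infsigma}, one obtains, for every sufficiently small $\eps>0$,
\[
\liminf_{k\to\infty}\langle\Psi_k,\tau_{y_k(\eps)}J_\eps\Psi_k\rangle\;\geq\;1-F(\eps)/d,\qquad F(\eps):=N(2\sqrt{\eps}\,C+\eps N)+2^N(\eps N)^2,
\]
with $F(\eps)\to 0$ as $\eps\to 0$. By Lemma~\ref{partun}, $J_\eps$ is supported in $\prod_{j=1}^N B(0,R_\eps+3/\eps)\subset B(0,r_\eps)$ (Euclidean ball in $\mathbb{R}^{3N}$) with $r_\eps:=\sqrt{N}(R_\eps+3/\eps)$, and $0\leq J_\eps\leq 1$. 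Since $\langle\Psi_k,\tau_{y_k(\eps)}J_\eps\Psi_k\rangle=\langle T_{y_k(\eps)}\Psi_k,J_\eps T_{y_k(\eps)}\Psi_k\rangle$, this yields
\[
\liminf_{k\to\infty}\int_{B(0,r_\eps)}|T_{y_k(\eps)}\Psi_k|^2\;\geq\;1-F(\eps)/d. \qquad(\ast)
\]

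\textbf{Step 2 (choosing one universal translation sequence).} Fix $\eps_0>0$ with $F(\eps_0)/d<1/4$ once and for all, and define $\Phi_k:=T_{y_k(\eps_0)}\Psi_k$, which is again minimizing by \eqref{trans1}. Given $\delta\in(0,1/4)$, choose $\eps\leq\eps_0$ with $F(\eps)/d<\delta$. Applying $(\ast)$ at both $\eps_0$ and $\eps$, the $\mathbb{R}^{3N}$-balls $B((y_k(\eps_0),\ldots,y_k(\eps_0)),r_{\eps_0})$ and $B((y_k(\eps),\ldots,y_k(\eps)),r_\eps)$ each contain more than half of $\|\Psi_k\|^2=1$ for large $k$, so they must overlap. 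This forces
\[
\sqrt{N}\,|y_k(\eps)-y_k(\eps_0)|\;\leq\;r_{\eps_0}+r_\eps\qquad\text{for large }k.
\]
Writing $\Phi_k=T_{h_k}T_{y_k(\eps)}\Psi_k$ (up to a phase) with $h_k:=y_k(\eps_0)-y_k(\eps)$, the change of variables $x\mapsto x-(h_k,\ldots,h_k)$ turns $(\ast)$ for $\eps$ into
\[
\liminf_{k\to\infty}\int_{B(0,\,r_{\eps_0}+2r_\eps)}|\Phi_k|^2\;\geq\;1-F(\eps)/d\;\geq\;1-\delta,
\]
which proves \eqref{concentration} with $P:=r_{\eps_0}+2r_\eps$.

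\textbf{Main obstacle.} The only genuine subtlety lies in the $\eps$-dependence of the translation centers $y_k(\eps)$ provided by Proposition~\ref{lions}(ii): the lemma demands one minimizing sequence that handles every $\delta$ simultaneously, whereas the operator inequality of Proposition~\ref{IMS} is naturally tied to an $\eps$-dependent center of concentration. The overlap argument in Step~2 is exactly the geometric input needed to pin down a single anchor $y_k(\eps_0)$ and control how far the true $y_k(\eps)$ can drift from it. Once this is granted, everything else is bookkeeping: Proposition~\ref{IMS}, the identity \eqref{infsigma}, and the fact that the error $F(\eps)/d$ can be made arbitrarily small.
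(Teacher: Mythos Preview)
Your proof is correct and follows essentially the same route as the paper. Both arguments combine Proposition~\ref{IMS} with \eqref{infsigma} to get $\eps$-dependent concentration near $y_k(\eps)$, then fix a single reference scale (your $\eps_0$ with $F(\eps_0)/d<1/4$; the paper's choice corresponds to $\delta=1/2$) and use the overlap argument of Lions to anchor all finer scales to the reference centers, yielding $\Phi_k=T_{y_k(\eps_0)}\Psi_k$ and $P=r_{\eps_0}+2r_\eps$.
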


\begin{proof}
Without loss of generality we may assume that $\delta<1/2$. By Lemma
\ref{novanish} there exists a minimizing sequence $(\Psi_k)$ for
which the sequence $(\rho_k)$ of the associated densities satisfies
the properties of Proposition 2.2 (ii) and hence Proposition
\ref{IMS} applies to $(\Psi_k)$. The operator inequality
\eqref{Hbound} implies that
\begin{equation*}
\langle  \Psi_k, H_{\rho_k}^N  \Psi_k \rangle \geq
E_{PT}^N+d-d\langle  \Psi_k, \tau_{y_k} J_\varepsilon
  \Psi_k\rangle-N ( 2 \sqrt{\varepsilon} C+
\varepsilon N)-2^N( \varepsilon N)^2.
\end{equation*}
Upon  rearranging this inequality, it follows from \eqref{infsigma}
that
\begin{equation*}
\liminf_{k \rightarrow \infty} \langle \Psi_k, \tau_{y_k}J_\vep
\Psi_k \rangle \geq 1-\frac{N}{d}( 2 C \sqrt{\vep}+\vep
N)-\frac{2^N}{d}(\vep N)^2 \geq 1-\delta,
\end{equation*}
for $\vep$ small enough. Since $J_\vep$ is compactly supported and
$0 \leq J_\vep\leq 1$ it follows that
\begin{equation}\label{conc}
\liminf_{k \rightarrow \infty} \int_{B(y_k,R)} |\Psi_k|^2 \geq 1
-\delta,
\end{equation}
where $R$ and $y_k$ depend on $\eps$ and hence on $\delta$. Using an
argument of Lions (see \cite{Li})  we shall now replace $(y_k)$ by
an other sequence $(y_k')$ that is independent of $\delta$ such that
\eqref{conc} still holds after enlarging $R$. Let $R'$ and $(y_k')$
be determined in the same way as $R$ and $(y_k)$ in the case
$\delta=1/2$. That is,
\begin{equation*}
   \liminf_{k \rightarrow \infty} \int_{B(y'_k,R')} |\Psi_k|^2 \geq \frac{1}{2}.
\end{equation*}
Since $\|\Psi_k\|=1$ and since $1-\delta>1/2$, by assumption, the
balls $B(y_k,R)$ and $B(y_k',R')$ must overlap for $k$ large enough.
It follows that
\begin{equation}
   \liminf_{k \rightarrow \infty} \int_{B(y'_k,R'+2R)} |\Psi_k|^2 \geq 1 -\delta.
\end{equation}
The sequence $\Phi_k = T_{y'_k}\Psi_k$ is minimizing and it
satisfies \eqref{concentration} with $P=R'+2R$.
\end{proof}

\begin{proof}[\textbf{Proof of Theorem \ref{main} (a), (c) (existence of a minimizer)}]
Let $(\Phi_k)$ be given by Lemma \ref{nodicho}. By Lemma
\ref{minbound}, part (b), $(\Phi_k)$ is  bounded in $\cQ_{N,A}$ and
hence, after passing to a subsequence, we may assume that $\Phi_k
\rightarrow\Phi\in \cQ_{N,A}$ weakly in $\cQ_{N,A}$. Since $A$ is
locally bounded it follows that $\Phi_k \rightarrow \Phi$ locally in
$\mathcal{H}_N$ and weakly in $\mathcal{H}_N$. Hence, by Lemma
\ref{nodicho}, for every $\delta>0$ there exists $P>0$ such that
\begin{equation*}
1=\lim_{k \to\infty} \|\Phi_k\|^2 \geq \|\Phi\|^2\geq
\int_{B(0,P)}|\Phi|^2 dx =\liminf_{k\to\infty}
\int_{B(0,P)}|\Phi_k|^2 dx \geq 1-\delta.
\end{equation*}
It follows that $\|\Phi\|=1$ and hence that $\Phi_k \rightarrow
\Phi$ strongly in $\mathcal{H}^N$. Since $\Phi_k \rightarrow\Phi\in
\cQ_{N,A}$ weakly in $\cQ_{N,A}$, the  parts (a) and (c) of Theorem
\ref{main} follow from Lemma~\ref{minbound}, (c).
\end{proof}


\begin{proof}[\textbf{Proof of Theorem \ref{main2}}]
This proof is based on Lemma~\ref{Hsin}, which clearly holds for any
$\alpha>0$. Let $\Psi$ be a minimizer with density $\rho$. By
Lemma~\ref{Hsin}, $H_\rho^N \geq E_{PT}^N$ and $\langle\Psi,
H_\rho^N\Psi\rangle = E_{PT}^N$. It follows that $\Psi$ belongs to
the domain of the Friedrichs' extension of $ H_\rho^N$ and that
$H_\rho^N\Psi = E_{PT}^N\Psi$. This equation agrees with the
Schr\"odinger equation~\eqref{EL} upon subtracting $D(\rho)\Psi$
from both sides.

By \cite{AG} eigenvalues of $H_{\rho}^N$ below
\begin{align*}
\Sigma &:=\lim_{R \rightarrow \infty} \Big(\inf_{\Phi \in D_R,
\|\Phi\|=1} \langle \Phi, H_{\rho}^N \Phi\rangle\Big),\\
D_R &:=\{\Phi\in\cQ_{N,A} \mid \Phi(x)=0 \text{ for } |x|<R\},
\end{align*}
are associated with exponentially decaying eigenfunctions. This
means that $e^{\beta |.|} \Psi \in L^2$ provided
$\beta^2<\Sigma-E_{PT}^N$. Applying Proposition \ref{IMS} to the
constant minimizing sequence $\Psi_k=\Psi$, for which the sequence
of densities $\rho_k=\rho$ obviously is concentrated, we see that
\begin{equation}\label{operbet}
H_\rho^N \geq E_{PT}^N+ \Delta E^N(1-J_\vep)-O(\sqrt{\varepsilon}),
\end{equation}
where $J_\vep$ is compactly supported and $\vep$ is small enough.
Since $\e$ can be arbitratilly small we obtain that 
$\Sigma \geq E_{PT}^N+\Delta E^N$, which concludes the proof.
\end{proof}

\section{Proof of Binding}\label{bindinginductionproof}

In this Section we prove Theorem \ref{main} part (b). To explain the
main ideas in their pure form, without the difficulties due to the
Pauli-principle, we first do the proof in the case of Bolzons, i.e.,
for Pekar-Tomasevich functional defined on $L^2(\mathbb{R}^{3N})$.
Thereafter we shall describe the modifications necessary to
accommodate fermions and bosons.

\smallskip\noindent
\emph{The case of Boltzons.} The functionals
$\mathcal{E}^1,\ldots,\mathcal{E}^{N-1}$ have minimizers
$\Phi_1,\ldots,\Phi_{N-1}$ by assumption. Assuming that
\begin{equation}\label{keinebin}
  E_{PT}^N=E_{PT}^k+E_{PT}^{N-k}
\end{equation}
for some $k \in \{1,...,N-1\}$ we shall prove in the Steps 1 and 2
below, that on the one hand $\Phi_k \otimes \Phi_{N-k}$ is a
minimizer of $\mathcal{E}^N$, on the other hand it cannot satisfy
the corresponding Euler-Lagrange equation. Hence the assumption
\eqref{keinebin} must be wrong.

\smallskip\noindent
\textbf{Step 1}: $\Phi_k \otimes \Phi_{N-k}$ is a minimizer of
$\mathcal{E}^N$, that is
\begin{equation}\label{tensoradditivitat}
\mathcal{E}^N(\Phi_k \otimes \Phi_{N-k}) =E_{PT}^N.
\end{equation}

From the definitions of the density, $\rho_\Phi$, and interaction
energy $D(\rho_\Phi)$ associated with any $\Psi$ (see \eqref{rdef},
\eqref{ddef}), we easily see that
\begin{equation}\label{tensum}
\rho_{\Phi_k \otimes \Phi_{N-k}}=\rho_{\Phi_k}+\rho_{\Phi_{N-k}}
\end{equation}
and
\begin{equation}\label{dtensor}
D(\rho_{\Phi_k \otimes \Phi_{N-k}})
=D(\rho_{\Phi_k})+D(\rho_{\Phi_{N-k}})+2D(\rho_{\Phi_k},\rho_{\Phi_{N-k}}),
\end{equation}
where
\begin{equation}\label{dzweidichten}
2D(\rho_{\Phi_k},\rho_{\Phi_{N-k}})=\langle \Phi_k \otimes
\Phi_{N-k}, \sum_{i=1}^k \sum_{j=k+1}^N \frac{1}{|x_i-x_j|} \Phi_k
\otimes \Phi_{N-k} \rangle.
\end{equation}
From \eqref{dtensor}, \eqref{dzweidichten}, and the assumption
\eqref{keinebin} it follows that
\begin{align*}
\mathcal{E}^N(\Phi_k \otimes \Phi_{N-k}) &=\mathcal{E}^k(\Phi_k)+
\mathcal{E}^{N-k}(\Phi_{N-k})\\
   &= E_{PT}^k+E_{PT}^{N-k} = E_{PT}^N.
\end{align*}

\smallskip\noindent
\textbf{Step 2}:  $\Phi_k \otimes \Phi_{N-k}$ does not solve the
Euler Lagrange equation of $\mathcal{E}^N$.

Suppose that $\Phi_k\otimes \Phi_{N-k}$ solves the Euler-Lagrange equation
\begin{equation}\label{EL1}
\left(\sum_{j=1}^N D_{A,x_j}^2+\sum_{i<j}^{N}
\frac{1}{|x_i-x_j|}-\sum_{j=1}^N V_{\rho_{\Phi_k \otimes
\Phi_{N-k}}}(x_j) -\lambda \right) \Phi_k \otimes \Phi_{N-k}=0,
\end{equation}
for some $\lambda \in \mathbb{R}$. Since $\Phi_k$ and $\Phi_{N-k}$
are minimizers of $\mathcal{E}^k$ and $\mathcal{E}^{N-k}$,
respectively, they satisfy the Euler-Lagrange equations
\begin{equation}\label{EL2}
\left(\sum_{j=1}^k D_{A,x_j}^2+\sum_{i<j}^{k}
\frac{1}{|x_i-x_j|}-\sum_{j=1}^k V_{\rho_{\Phi_k}}(x_j) -\lambda_1
\right)\Phi_k=0,
\end{equation}
and
\begin{equation}\label{EL3}
\left(\sum_{j=k+1}^N D_{A,x_j}^2+\sum_{k+1\leq i<j}^{N}
\frac{1}{|x_i-x_j|}-\sum_{j=k+1}^N
V_{\rho_{\Phi_{N-k}}}(x_j)-\lambda_2 \right) \Phi_{N-k}=0,
\end{equation}
with $\lambda_1,\lambda_2 \in \mathbb{R}$. Note that, by
\eqref{tensum},
\begin{equation}\label{Vadditivitat}
V_{\rho_{\Phi_k \otimes
\Phi_{N-k}}}=V_{\rho_{\Phi_k}}+V_{\rho_{\Phi_{N-k}}}.
\end{equation}
Taking tensor products of the Equations~\eqref{EL2} and \eqref{EL3}
with $\Phi_{N-k}$ and $\Phi_k$, respectively, and subtracting the
resulting equations from \eqref{EL1}, we obtain that
\begin{equation}\label{eltogether}
\left(\sum_{i=1}^k \sum_{j=k+1}^N \frac{1}{|x_i-x_j|}-\sum_{i=1}^k
V_{\rho_{\Phi_{N-k}}}(x_i) -\sum_{j=k+1}^N V_{\rho_{\Phi_{k}}}(x_j)
-\lambda+\lambda_1+\lambda_2 \right) \Phi_k \otimes \Phi_{N-k}=0.
\end{equation}
Since $V_{\rho_{\Phi_k}}$ and $V_{\rho_{\Phi_{N-k}}}$  are bounded
functions (see \eqref{vrkbound} in the Appendix) the expression in
parentheses is a multiplication operator that is bounded below by
\begin{equation}\label{schrankebel}
\sum_{i=1}^k \sum_{j=k+1}^N \frac{1}{|x_i-x_j|}-M,
\end{equation}
for some $M>0$. Clearly, \eqref{schrankebel} is positive, e.g., for
$x_1$ close to $x_{k+1}$. We may thus find balls $B_1 \subset
\mathbb{R}^{3k}$ and $B_2 \subset \mathbb{R}^{3(N-k)}$ such that
\eqref{schrankebel} is strictly positive on $B_1 \times B_2$. At the
same time we may assume, after suitable magnetic translations of
$\Phi_k, \Phi_{N-k}$, that
\begin{equation}\label{posint}
\int_{B_1 \times B_2} |\Phi_k \otimes \Phi_{N-k}|^2 >0.
\end{equation}
The strict positivity of the lower bound \eqref{schrankebel} and and
the inequality \eqref{posint} are in contradiction with
\eqref{eltogether}, which completes the proof of Step~2.

\medskip\noindent
\emph{The case of fermions.} In the case of fermions, the tensor
product $\Phi_k \otimes \Phi_{N-k}$ of the minimizers $\Phi_k$ and
$\Phi_{N-k}$ in Step 1 must be antisymmetrized and normalized. The
density of the resulting $N$-particle state is not the sum of the
densities of  $\Phi_k, \Phi_{N-k}$. In order to regain an analogue
of  \eqref{tensum} we shall apply smooth space cut-offs at distance
$R$ from the origin and then move $\Phi_{N-k}$ by a distance of
$3R$. These cut-off minimizers, as well as their antisymmetrized
tensor product, are approximate minimizers satisfying approximate
Euler-Lagrange equations, the error being exponentially small. But
such an exponentially small error is not compatible with the power
laws decay of the Coulomb interaction between the first $k$ and the
last $N-k$ particles.

We now proceed with the details. We use $c,C$ to denote positive
constants possibly changing from one equation to another. Suppose
that for some $k$
\begin{equation}\label{annahme}
    E_{PT}^N=E_{PT}^k+E_{PT}^{N-k},
\end{equation}
and let $\psi_m$ be a minimizer of $\mathcal{E}^m$, $m\in
\{1,\ldots,N-1\}$. Let $f\in C^{\infty}(\R;[0,1])$ with $f(s)=1$ if
$s\leq -1$ and  $f(s)=0$ if $s\geq 0$, and let $\chi_R(x) :=
f(|x|-R)$, a smoothed characteristic function of the ball
$B(0,R)\subset \R^3$. We define
$$
     \phi_m=\frac{\psi_m \chi_R^{\otimes m}}{\|\psi_m \chi_R^{\otimes m}\|}.
$$
Let $y \in \mathbb{R}^3$ with $|y|=3R$. Recall that $T_y \phi_{N-k}$
denotes a magnetic translation of $\phi_{N-k}$ as defined in
\eqref{trans}. Due to the exponential decay of the minimizers
$\psi_k, \psi_{N-k}$ and their gradients and Laplacians we obtain
that $\phi_k,T_y\phi_{N-k}$ are approximate minimizers and satisfy
respectively the Euler-Lagrange equations of $\mathcal{E}^k,
\mathcal{E}^{N-k}$ up to an exponentially small error. More
precisely,
\begin{align}
   (H_{\rho_{\phi_k}}^k - E_{PT}^k ) \phi_k &=O_{\cQ_{k,A}}(e^{-cR}),\label{fastELik} \\
   (H_{\rho_{T_y \phi_{N-k}}}^{N-k} -E_{PT}^{N-k}) T_y \phi_{N-k}&= O_{\cQ_{N-k,A}}(e^{-cR}),\label{fastELiNmink}
\end{align}
where $O_{\cQ_{m,A}}$ refers to the $\cQ_{m,A}$ norm. Since
$\mathcal{E}^{m}(\phi)=\langle\phi,H_{\rho_{\phi}}^m\phi\rangle$ it follows that
\begin{equation}\label{ungefaehrmin}
\begin{split}
   \mathcal{E}^{k}(\phi_k) &=E_{PT}^k+ O(e^{-cR}),\\
    \mathcal{E}^{N-k}(T_y\phi_{N-k}) &=E_{PT}^{N-k}+ O(e^{-cR}).
\end{split}
\end{equation}
Equations \eqref{fastELik} and \eqref{fastELiNmink} correspond to
\eqref{EL2} and \eqref{EL3} in the boltzonic case, note however the
irrelevant constants $D(\rho_{\phi_k})$ and $D(\rho_{T_y \phi_{N-k}})$
in the Hamiltonians defined by \eqref{HsN}.

Let now $\Phi:= P_k (\phi_k \otimes T_y \phi_{N-k})$. Here $P_k:=
\sqrt{N \choose k} P_A$ where $P_A$
 denotes the projection onto the completely antisymmetric functions with respect to permutations
 of pairs of positions and spins. The factor in front of $P_A$ is chosen so that $\Phi$ is also normalized.
Since the densities of $\phi_k, T_y \phi_{N-k}$ have disjoint
supports we obtain that $\rho_{\Phi}=\rho_{\phi_k}+\rho_{ T_y
\phi_{N-k}}$ which similarly to the case of Boltzons implies that
\begin{equation}\label{cutadd}
  \mathcal{E}^{N}(\Phi)=\mathcal{E}^{k}(\phi_k)+\mathcal{E}^{N-k}( T_y \phi_{N-k})
\end{equation}
and that
\begin{equation}\label{Vadd}
V_{\rho_{\Phi}}=V_{\rho_{\phi_k}}+V_{\rho_{ T_y \phi_{N-k}}}.
\end{equation}
From \eqref{ungefaehrmin} and \eqref{cutadd} we obtain that
\begin{equation}\label{fastmin}
\mathcal{E}^N(\Phi)= E_{PT}^k +E_{PT}^{N-k} +O(e^{-cR}).
\end{equation}
We show now that $\Phi$ satisfies an approximate Euler Lagrange
equation. We take the tensor product of both sides of
\eqref{fastELik} with $T_y \phi_{N-k}$. Similarly, we take tensor
product of both sides of \eqref{fastELiNmink} with $\phi_{k}$. By
adding the resulting equations and then adding $J_k(\phi_k \otimes
T_y \phi_{N-k})$ on both sides, where
\begin{equation}\label{Jkdef}
J_k(x_1,...,x_N): = \sum_{i=1}^k \sum_{j=k+1}^N \frac{1}{|x_i-x_j|}
-\sum_{i=1}^k V_{\rho_{ T_y \phi_{N-k}}}(x_i) -\sum_{j=k+1}^N
V_{\rho_{\phi_k}}(x_j),
\end{equation}
we arrive at
\begin{equation}\label{bound4pre}
(H_{\rho_{\Phi}}^N -c_R) \phi_k \otimes T_y \phi_{N-k} = J_k(\phi_k
\otimes T_y \phi_{N-k}) + O_{\cQ_{N,A}}(e^{-cR}),
\end{equation}
where $c_R:=E_{PT}^k +E_{PT}^{N-k}-2 D(\rho_{\phi_k}, \rho_{T_y
\phi_{N-k}})$  depends on $R$. We have used \eqref{HsN} and
\eqref{Vadd}. The fact that the supports of $\phi_k, T_y \phi_{N-k}$
have distance $R$ in each particle coordinate implies that
\begin{equation}\label{Jkbound}
  |J_k|=O(R^{-1}), \quad |\nabla J_k| =O(R^{-2}), \text{ uniformly in  } (x_1,...,x_N) \in \supp \phi_k \otimes T_y \phi_{N-k}.
\end{equation}
  Applying the antisymmetrization $P_k$ to both sides of
\eqref{bound4pre} and using \eqref{Jkbound} as well as the symmetry
of $H_{\rho_{\Phi}}^N$ with respect to the $N$ particles, we arrive
at
\begin{equation}\label{bound4}
     \|(H_{\rho_{\Phi}}^N -c_R) \Phi \|_{\cQ_{N,A}} = O(R^{-1}).
\end{equation}
We are now going to improve this error estimate by changing the
Lagrange multiplier by $O(R^{-1})$. To this end we write
 \begin{equation}\label{bound2}
H_{\rho_{\Phi}}^N \Phi=\lambda_R \Phi+ f_R,\ \text{with}\ \langle f_R,\Phi\rangle=0.
\end{equation}
First observe that \eqref{bound4} and \eqref{bound2} imply that
$\lambda_R=c_R+O(R^{-1})$ and therefore
\begin{equation}\label{bound3}
\|f_R\|_{\cQ_{N,A}} = O(R^{-1}).
\end{equation}
On the other hand using \eqref{bound2} twice we obtain that
\begin{equation*}
\langle f_R, f_R \rangle =\langle f_R, H_{\rho_{\Phi}}^N \Phi
\rangle=\langle f_R, (H_{\rho_{\Phi}}^N-E_{PT}^N) \Phi \rangle.
 \end{equation*}
Recall that $(H_{\rho_{\Phi}}^N-E_{PT}^N)|_{\cH_N} \geq 0$ (see Lemma \ref{Hsin}) and $f_R,
\Phi \in \cH_N$. Hence, by Cauchy-Schwarz for positive
(semi-)definite quadratic forms we find
\begin{align*}
    \|f_R\|^2 & \leq \langle f_R, (H_{\rho_{\Phi}}^N-E_{PT}^N) f_R \rangle^{1/2}
\langle \Phi, (H_{\rho_{\Phi}}^N-E_{PT}^N) \Phi \rangle^{1/2}\\ &
\leq c \| f_R \|_{\cQ_{N,A}} \langle \Phi,
(H_{\rho_{\Phi}}^N-E_{PT}^N) \Phi \rangle^{1/2}.
 \end{align*}
This estimate together with $\mathcal{E}^{(N)}(\Phi)= \langle \Phi,
 H_{\rho_{\Phi}}^N \Phi \rangle$ (see Lemma \ref{Hsin}),
 \eqref{annahme}, \eqref{fastmin} and \eqref{bound3} implies that
\begin{equation}\label{chirexp}
  \|(H_{\rho_{\Phi}}^N - \lambda_R) \Phi \| = \| f_R\| = O( e^{-c R}).
\end{equation}
By definition of $\Phi$, the equations \eqref{bound2} and
\eqref{chirexp} imply that
\begin{equation}\label{fastEL}
     \|(H_{\rho_{\Phi}}^N-\lambda_R) \phi_k \otimes T_y \phi_{N-k}\| =O(e^{-cR}),
\end{equation}
because $P_k$ acts isometrically on the left hand side of
\eqref{fastEL} and commutes with $H_{\rho_{\Phi}}^N$. From
\eqref{bound4pre} and \eqref{fastEL} it follows that
\begin{equation}\label{expocon}
\|( J_k+c_R-\lambda_R )  \phi_k \otimes T_y \phi_{N-k}\|=O(e^{-cR}).
\end{equation}
This is in contradiction with Lemma \ref{noexpocon} below. Hence,
our assumption \eqref{annahme} must be wrong and \eqref{binding} is
proved.


\begin{lemma}\label{noexpocon}
If the minimizers $\psi_m, m \in \{k,N-k\}$ are chosen so that
$\int u \rho_{\psi_m}(u) du=0$, then there exists a constant $C>0$ such that
   \begin{equation*}
   \inf_{M \in \mathbb{R}}\|( J_k+M )  \phi_k \otimes T_y \phi_{N-k}\| \geq \frac{C}{R^3}
\end{equation*}
(recall that $|y|=3R$). In particular, \eqref{expocon} does not
hold.
\end{lemma}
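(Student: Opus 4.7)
The plan is to use $\inf_M \|(J_k+M)\Phi\|^2 = \mathrm{Var}_\Phi(J_k)$ with $\Phi := \phi_k\otimes T_y\phi_{N-k}$, and to lower-bound this variance by $c/R^6$ via a multipole expansion of $J_k$ in powers of $|y|^{-1}=(3R)^{-1}$. Since $\phi_k$ is supported in $B(0,R)^k$ and $T_y\phi_{N-k}$ near $B(-y,R)^{N-k}$, each factor $1/|x_i-x_j|$ with $i\le k<j$ and each mean-field $V_{\rho_\cdot}$ in $J_k$ admits a convergent Legendre expansion in $1/|y|$. A direct computation shows that the monopole contributions combine into the configuration-independent constant $-k(N-k)/|y|$, and that the dipole contributions cancel thanks to the centering hypothesis, which forces $\langle X\rangle_{\phi_k}=\langle U\rangle_{T_y\phi_{N-k}}=0$ for $X:=\sum_{i\le k}x_i$ and $U:=\sum_{j>k}(x_j+y)$. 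The first non-trivial fluctuation is the quadrupole
\[
  J_k^{(2)} \;=\; -\frac{Q_2(y,U,X)}{|y|^5} + c_2(y),\qquad Q_2(y,U,X) := 3(y\cdot U)(y\cdot X) - |y|^2\, U\cdot X,
\]
with $c_2(y)$ a configuration-independent constant; let $\mathcal{T}$ collect the $\ell\ge 3$ multipoles.

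The core of the argument is the quadrupole lower bound. In the product state, $\langle Q_2\rangle_\Phi=0$, so $\langle Q_2^2\rangle_\Phi = \mathrm{tr}(T\,C^{(N-k)}\,T\,C^{(k)})$ with $T_{\alpha\beta}:=3y_\alpha y_\beta-|y|^2\delta_{\alpha\beta}$ and $C^{(m)}_{\alpha\beta}:=\langle X_\alpha X_\beta\rangle_{\phi_m}$. The matrix $T$ has eigenvalues $2|y|^2,-|y|^2,-|y|^2$ and is invertible; $C^{(m)}$ is strictly positive definite, since any null direction would make $(y\cdot X)\phi_m$ vanish, contradicting $\phi_m\ne 0$. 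By the exponential decay of $\psi_m$ from Theorem~\ref{main2}, $C^{(m)}_{\phi_m}\to C^{(m)}_{\psi_m}$ as $R\to\infty$, with uniformly positive-definite limit; writing $T=|y|^2\tilde T(\hat y)$ and using compactness of $S^2$, I obtain $\mathrm{tr}(TC^{(N-k)}TC^{(k)}) \ge c\,|y|^4$, hence $\sqrt{\mathrm{Var}_\Phi(-Q_2/|y|^5)} \ge c'/R^3$ for $R$ large.

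Finally, for the tail $\mathcal T$, repeating the order-by-order cancellation shows that the $\ell$-th multipole reduces, modulo constants, to a combination of terms $T^{(\ell)}Y_pZ_q$ with $p+q=\ell$, $p,q\ge 1$, where $Y_p := \sum_i x_i^{\otimes p} - \int z^{\otimes p}\rho_{\phi_k}(z)\,dz$ and $Z_q$ is the analogous mean-zero operator for $T_y\phi_{N-k}$. Since $\Phi$ factorizes, $|T^{(\ell)}|\le C/|y|^{\ell+1}$, and the polynomial moments of $\psi_m$ are all finite (again by Theorem~\ref{main2}), one gets $\sqrt{\mathrm{Var}_\Phi(J_k^{(\ell)})} \le C_\ell/R^{\ell+1}$; summing the geometric tail yields $\sqrt{\mathrm{Var}_\Phi(\mathcal T)} \le C/R^4$. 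The triangle inequality for the seminorm $\sqrt{\mathrm{Var}_\Phi(\cdot)}$ then gives
\[
  \sqrt{\mathrm{Var}_\Phi(J_k)} \;\ge\; \sqrt{\mathrm{Var}_\Phi(-Q_2/|y|^5)} - \sqrt{\mathrm{Var}_\Phi(\mathcal T)} \;\ge\; \frac{c'}{R^3} - \frac{C}{R^4} \;\ge\; \frac{C''}{R^3}
\]
for $R$ large, which is the desired bound. The main obstacle is precisely this last step: the naive operator-norm estimate $\|\mathcal T\|_{L^\infty}=O(R^{-1})$ is far too crude, and one must genuinely exploit the systematic cancellation of the mean-field averages inside $J_k$ so that each multipole-$\ell$ fluctuation inherits the full $|y|^{-(\ell+1)}$ suppression from its coefficient rather than merely the $|y|^{-1}$ suppression of its size.
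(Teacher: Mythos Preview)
Your overall strategy is sound and differs from the paper's in an instructive way. The paper restricts from the start to a fixed ball $B_d\subset\R^{3N}$ (independent of $R$) carrying at least half the mass of $|\phi_k\otimes\phi_{N-k}|^2$; on $B_d$ all coordinates satisfy $|z_i|\le d$, so the Taylor remainder after the $|y|^{-3}$ term is \emph{uniformly} $O(d^3/R^4)$, and the lower bound for the leading term comes from a soft positivity-plus-compactness argument for $L(\hat y,D)$. You instead work globally with $\mathrm{Var}_\Phi(J_k)$ and bound the quadrupole contribution below by the explicit trace $\mathrm{tr}(TC^{(N-k)}TC^{(k)})$, which is more quantitative but forces you to control the tail on the full radius-$R$ support.

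That is precisely where your argument has a real gap. You propose to bound $\sqrt{\mathrm{Var}_\Phi(J_k^{(\ell)})}\le C_\ell/R^{\ell+1}$ and then ``sum the geometric tail,'' but the series is not geometric: each $C_\ell$ involves moments $\langle Y_p^2\rangle_{\phi_k}\langle Z_q^2\rangle_{\phi_{N-k}}$ of total order $2\ell$, and for an exponentially decaying state the $2\ell$-th moment grows like $(2\ell)!\,\beta^{-2\ell}$. The general term therefore behaves like $(\ell/(\beta R))^\ell/R$ and the series $\sum_{\ell\ge 3}C_\ell/R^{\ell+1}$ diverges for every fixed $R$. Your structural decomposition into $T^{(\ell)}Y_pZ_q$ (which is correct) does not cure this, since the problem is the factorial growth of the moments, not the size of $T^{(\ell)}$. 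The fix is easy but different from what you wrote: do not expand the tail term by term. Either adopt the paper's restriction to $B_d$, or bound the tail as a whole by the Lagrange remainder, $|\mathcal T(w)|\le C|w|^3/(|y|-|w|)^4\le C|w|^3/R^4$ on $\supp\Phi$, which gives $\|\mathcal T\,\Phi\|^2\le CR^{-8}\langle|w|^6\rangle_\Phi=O(R^{-8})$ since only the sixth moment is needed.
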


\begin{proof}
Let $M \in \R$ be arbitrary. Recall that $ T_y \phi_{N-k}$ by
definition is a magnetic translation  by $y$ with $|y|=3R$ of
$\phi_{N-k}$. By a change of variables for the particles with labels in
$\{k+1,\ldots,N\}$ we find that 
\begin{equation}\label{Ifirstexp}
  \big\|( J_k+ M )  \phi_k \otimes T_y \phi_{N-k} \big\| =\| I_R \|,
\end{equation}
where
\begin{equation}\label{IRdef}
I_R:= (\tilde{J}_k + M) \phi_k \otimes \phi_{N-k}
\end{equation}
 and
\begin{equation}\label{Jktilde}
\tilde{J}_k(z_1,...,z_N):=\sum_{i=1}^k \sum_{j=k+1}^N
\frac{1}{|z_i-z_j+y|} - \sum_{i=1}^k V_{\rho_{\phi_{N-k}}}(z_i+y) -
\sum_{j=k+1}^N V_{\rho_{\phi_k}}(z_j-y).
\end{equation}
By \eqref{Ifirstexp} it remains to prove that there exists $C>0$ independent of $M$
so that
\begin{equation}\label{IRest1}
 \|I_R\| \geq \frac{C}{R^3}.
\end{equation}
From the assumption of the lemma and the exponential decay of $\psi_m$
we obtain for $\phi_m$ that
\begin{equation}\label{meanzero}
 \int u \rho_{\phi_m}(u)du=O(e^{-cR}),\quad \text{for}\ m\in \{k,N-k\}.
\end{equation}
By normalization of $\psi_m$ and the definition of $\phi_m$ we may
choose $d>0$ such that
\begin{equation}\label{intphinein0}
\int_{B_d} d \underline{z}_1...d \underline{z}_N
|\phi_k(\underline{z}_1,...,\underline{z}_k)|^2
|\phi_{N-k}(\underline{z}_{k+1},...,\underline{z}_N)|^2
 \geq \frac{1}{2},
\end{equation}
for all $R \geq d+1$ where $B_d:=B(0,d)^N \subset \mathbb{R}^{3N}$. To prove \eqref{IRest1}, and thus the lemma, it clearly suffices to
show that
\begin{equation}\label{I2est}
\|I_R \chi_{B_d} \| \geq \frac{C}{R^3},\qquad C>0,
\end{equation}
where $C$ is independent of $M$. 

We are going to expand \eqref{Jktilde} in powers of $\frac{1}{|y|}$. To this end we first remark that
\begin{equation}\label{Taylor}
\frac{1}{|w-y|}=\frac{1}{|y|}+\frac{\widehat{y} \cdot
w}{|y|^2}+\frac{3(\widehat{y} \cdot w)^2-|w|^2}{2|y|^3} +
O\left( \frac{|w|^3}{|y|^4}\right),  \text{ uniformly in } |w| \leq 2R,
\end{equation}
where $\hat{y}=y/|y|$. 
Using this, \eqref{meanzero}, $\text{supp}\rho_{\phi_m} \subset B(0,R)$ and the definition of $V_{\rho}$
(see \eqref{vs}) we obtain for $|z_i|, |z_j| \leq d$
\begin{align}
V_{\rho_{\phi_k}}(z_j-y)
 &=k\left(\frac{1}{|y|}+\frac{\widehat{y}  \cdot
  z_j}{|y|^2}+\frac{3(\widehat{y} \cdot z_j)^2-|z_j|^2}{2
    |y|^3} \right)  +\frac{f_k(\widehat{y})}{|y|^3} +O(R^{-4}), \label{VEnt1}\\
V_{\rho_{\phi_{N-k}}}(z_i+y)
 &=(N-k)\left(\frac{1}{|y|}-\frac{\widehat{y} \cdot
  z_i}{|y|^2} +\frac{3(\widehat{y} \cdot z_i)^2-|z_i|^2}{2
    |y|^3} \right)  +\frac{f_{N-k}(\widehat{y})}{|y|^3} +O(R^{-4}),\label{VEnt2}
\end{align}
where
\begin{equation}\label{fmdef}
     f_m(\hat{x}):=\int \rho_{\phi_m}(u) \frac{3(\hat{x} \cdot
       u)^2-|u|^2}{2} du.
\end{equation}
Recall that $\phi_m$ depends on $R$ and hence the exponential decay of $\psi_m$
is needed for establishing the bound $O(R^{-4})$. Using \eqref{Taylor} again we obtain
\begin{multline}\label{Taylorij}
\frac{1}{|z_i-z_j+y|}=\left(\frac{1}{|y|}+\frac{\widehat{y} \cdot
(z_j-z_i)}{|y|^2}+\frac{3(\widehat{y} \cdot (z_j-z_i))^2-|z_j-z_i|^2}{2|y|^3} \right) \\+
O\left( R^{-4} \right), \quad \forall z_i, z_j \in B(0,d).
\end{multline}
Inserting \eqref{VEnt1}, \eqref{VEnt2} and \eqref{Taylorij} into \eqref{IRdef} (see also \eqref{Jktilde})
an elementary but somewhat lengthy calculation gives that
\begin{multline}\label{Ichiexp}
   I_R \chi_{B_d}(z_1,\ldots,z_N)  =\frac{1}{|y|^3} \left(\sum_{i=1}^k \sum_{j=k+1}^N
   (z_i \cdot z_j-3(z_i \cdot \hat{y})(z_j \cdot \hat{y})) + C_{y,M} \right) \\
   \times (\phi_k \otimes \phi_{N-k}) \chi_{B_d}(z_1,...,z_N) +O(R^{-4}),
\end{multline}
where $C_{y,M}=-(N-k) f_k(\hat{y})-k
f_{N-k}(\hat{y})-k(N-k)|y|^2+M|y|^3$ depends on $y$ and $M$
only. We recognize in \eqref{Ichiexp} the interaction energy $(z_i
\cdot z_j-3(z_i \cdot \hat{y})(z_j \cdot \hat{y}))/|y|^3$ of two
dipoles $z_i$ and $z_j$ separated by $y$. Let 
\begin{multline}\label{Ldef} L(\hat{y},D)=\int_{B_d} d \underline{z}_1 d \underline{z}_2... d \underline{z}_N
   \Bigg|\Big(\sum_{i=1}^k \sum_{j=k+1}^N (z_i \cdot z_j-3(z_i \cdot \hat{y})(z_j \cdot \hat{y})) + D \Big)\\
 \times \phi_k(\underline{z}_1,...,\underline{z}_k) \phi_{N-k}( \underline{z}_{k+1},...,\underline{z}_N)\Bigg|^2.
\end{multline}
Then, by \eqref{Ichiexp},
\begin{equation}\label{I2preest}
     \|I_R\chi_{B_d}\| = \frac{1}{|y|^3} L(\hat{y},C_{y,M})^{1/2} + O(R^{-4}),
\end{equation}
and it remains to show that there exists a constant $C>0$ such that
\begin{equation}\label{Lgeqc}
L(\hat{y},D) \geq C, \quad \forall \hat{y},D.
\end{equation}
This estimate together with  \eqref{I2preest}
concludes the proof of \eqref{I2est} and therefore of Lemma
\ref{noexpocon}.

To prove \eqref{Lgeqc} we first establish that $L(\hat{y},D)$ is
everywhere positive. To this end we fix $y,D$ and we consider the
function
\begin{equation*}
f(z_1,z_2,...,z_N)=\bigg(\sum_{i=1}^k z_i \bigg) \cdot \bigg(
\sum_{j=k+1}^N z_j \bigg) -3 \bigg( \sum_{i=1}^k z_i
\cdot \hat{y} \bigg) \bigg( \sum_{j=k+1}^N z_j \cdot
\hat{y} \bigg) + D,
\end{equation*}
which is part of the integrand in \eqref{Ldef}. One can show that
$f(z_1,z_2,...,z_N) \neq 0$ almost everywhere, which together with
\eqref{intphinein0}  implies that
\begin{equation}\label{Lbig0}
L(\hat{y},D) >0.
\end{equation}
Now we will use a continuity argument to show \eqref{Lgeqc}. Since
$f$ is continuous and thus bounded on $B_d$ it follows, by the
dominated convergence theorem, that $L$ is a continuous function of
$\hat{y},D$. Moreover, $\lim_{|D| \rightarrow \infty}
L(\hat{y},D)=\infty$.
 These observations together with \eqref{Lbig0} and the fact that
      a continuous function on a compact set attains its minimum
      give \eqref{Lgeqc}.
\end{proof}

\appendix
\section{Properties of $\cE^N$}

\begin{lemma}\label{minbound}\text{}
\begin{itemize}
\item[(a)] The functional $\mathcal{E}^N$ is bounded from below on the set
$S_N:=\{\Psi \in \cQ_{N,A}: \|\Psi\|=1\}$.

 \item[(b)] Every minimizing sequence $(\Psi_k)$ of $\mathcal{E}^N$ on
$S_N$ is bounded in $\cQ_{N,A}$.

\item[(c)] If for a minimizing sequence $(\Psi_k)$ of $\mathcal{E}^N$ we
have that $\Psi_k \rightarrow \Psi$ weakly in $\cQ_{N,A}$ and
strongly in $\mathcal{H}_N$ then $\Psi$ is a minimizer of
$\mathcal{E}^N$.
\end{itemize}
\end{lemma}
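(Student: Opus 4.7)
My plan is to derive all three parts from a single pointwise estimate on the effective potential $V_{\rho_\Psi}$. The key bound is
\[
  \|V_{\rho_\Psi}\|_{L^\infty}\;\leq\;2\sqrt{N}\,\|\Psi\|\,\|\Psi\|_{\cQ_{N,A}},
\]
obtained by writing $V_{\rho_\Psi}(x_0)=\sum_{j=1}^N \int |\Psi|^2/|x_0-x_j|\,d\underline{x}_1\cdots d\underline{x}_N$, applying Cauchy--Schwarz against $|\Psi|$ in each term, invoking the Hardy and diamagnetic inequalities to get $\int |\Psi|^2/|x_0-x_j|^2\,d\underline{x}\leq 4\|D_{A,x_j}\Psi\|^2$, and then Cauchy--Schwarz over $j$. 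This is the estimate referred to as \eqref{vrkbound} in Section~\ref{bindinginductionproof}.

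For part (a), combined with $\|\rho_\Psi\|_1=N$ on $S_N$ this yields $D(\rho_\Psi)\leq N^{3/2}\|\Psi\|_{\cQ_{N,A}}$. Dropping the nonnegative Coulomb repulsion and writing $t^2:=\sum_j\|D_{A,x_j}\Psi\|^2$, we obtain
\[
  \mathcal{E}^N(\Psi)\;\geq\; t^2-N^{3/2}\sqrt{1+t^2}\;\geq\; t^2-N^{3/2}(1+t),
\]
which has an $N$-dependent lower bound. For part (b) the same inequality shows that any minimizing sequence with $\mathcal{E}^N(\Psi_k)\leq M$ has $t_k$ bounded, hence $\|\Psi_k\|_{\cQ_{N,A}}^2=1+t_k^2$ is bounded.

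For part (c), strong $\cH_N$-convergence gives $\|\Psi\|=1$, and the identity $|\Psi_k|^2-|\Psi|^2=(|\Psi_k|-|\Psi|)(|\Psi_k|+|\Psi|)$ together with Cauchy--Schwarz and $\bigl||\Psi_k|-|\Psi|\bigr|\leq|\Psi_k-\Psi|$ yields $\rho_{\Psi_k}\to\rho_\Psi$ in $L^1(\R^3)$. By (b) the sequence is uniformly $\cQ_{N,A}$-bounded, so the key estimate gives $\|V_{\rho_{\Psi_k}}\|_\infty,\|V_{\rho_\Psi}\|_\infty\leq C$. From the algebraic decomposition
\[
  D(\rho_{\Psi_k})-D(\rho_\Psi)\;=\;\int V_{\rho_\Psi}(\rho_{\Psi_k}-\rho_\Psi)\,dx+D(\rho_{\Psi_k}-\rho_\Psi),
\]
with $|D(\rho_{\Psi_k}-\rho_\Psi)|\leq\tfrac12\|V_{\rho_{\Psi_k}}-V_{\rho_\Psi}\|_\infty\,\|\rho_{\Psi_k}-\rho_\Psi\|_1\leq C\|\rho_{\Psi_k}-\rho_\Psi\|_1$, both terms vanish as $k\to\infty$; hence $D(\rho_{\Psi_k})\to D(\rho_\Psi)$. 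The remaining kinetic and Coulomb-repulsion quadratic forms are nonnegative, convex, and continuous on $\cQ_{N,A}$ (the repulsion is controlled by the kinetic via the many-body Hardy inequality), hence weakly lower semicontinuous. Consequently $\liminf_k\mathcal{E}^N(\Psi_k)\geq\mathcal{E}^N(\Psi)\geq E_{PT}^N$, so $\mathcal{E}^N(\Psi)=E_{PT}^N$.

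The main obstacle is part (c): the attractive term $-D(\rho_\Psi)$ is not a priori weakly lower semicontinuous, so one cannot conclude by soft arguments alone. The device that resolves this is the uniform $L^\infty$-bound on $V_{\rho_{\Psi_k}}$, which upgrades the $L^1$-convergence of densities (free from strong $\cH_N$-convergence) into genuine continuity of $D$ along the minimizing sequence, allowing the non-convex term to be passed to the limit exactly.
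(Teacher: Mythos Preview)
Your proof is correct and follows essentially the same approach as the paper: parts (a) and (b) rest on the Hardy and diamagnetic inequalities, and for part (c) both arguments establish $D(\rho_{\Psi_k})\to D(\rho_\Psi)$ via the uniform $L^\infty$-bound on $V_{\rho_{\Psi_k}}$ combined with $L^1$-convergence of the densities, then conclude by weak lower semicontinuity of the kinetic term. The only minor difference is the treatment of the Coulomb repulsion in (c): you invoke weak lower semicontinuity of a nonnegative, $\cQ_{N,A}$-continuous quadratic form, whereas the paper proves the slightly stronger statement that $\langle\Psi_k,|x_i-x_j|^{-1}\Psi_k\rangle$ actually converges, using that $|x_i-x_j|^{-1}\colon\cQ_{N,A}\to\cH_N$ is bounded (so $|x_i-x_j|^{-1}\Psi_k\rightharpoonup|x_i-x_j|^{-1}\Psi$ in $\cH_N$) together with strong $\cH_N$-convergence of $\Psi_k$.
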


\begin{proof}
(a),(b) They follow from the Hardy and diamagnetic inequalities.
 We recall from \cite{LL} that the diamagnetic inequality states that if
$\phi \in H_A^1 $ then we have that $|\nabla|\phi|(x)| \leq |(D_{A,
x_1} \phi(x),...,D_{A, x_N} \phi(x))|$, for almost all $x \in
\mathbb{R}^{3N}$. In the case $N=1$ and without spin a detailed
proof of parts (a) and (b) of the Lemma is given in \cite{GHW} and
in the general case the argument is similar.

(c) We will now show that if $\Psi_k \rightarrow \Psi$ weakly in
$\cQ_{N,A}$ and strongly in $\mathcal{H}_N$ then
\begin{equation}\label{psimini}
 \mathcal{E}^{N}(\Psi) \leq \liminf_{k \rightarrow \infty} \mathcal{E}^{N}(\Psi_k),
\end{equation}
from which we conclude that $\Psi$ is a minimizer of the
Pekar-Tomasevich functional $\mathcal{E}^N$. Indeed, recall that
\begin{equation}\label{PTred}
\mathcal{E}^N(\Psi)=\langle \Psi, \left(\sum_{j=1}^N D_{A,x_j}^2+
\sum_{i<j} \frac{1}{|x_i-x_j|}\right)\Psi \rangle-D( \rho_{\Psi}).
\end{equation}
Since $\Psi_k \rightarrow \Psi$ weakly in $\cQ_{N,A}$ and
$\|\Psi_k\|=\|\Psi\|=1$ we see that
\begin{equation}\label{psimini1}
\langle \Psi, \sum_{j=1}^N D_{A,x_j}^2 \Psi \rangle \leq \liminf_{k
\rightarrow \infty} \langle \Psi_k, \sum_{j=1}^N D_{A,x_j}^2 \Psi_k
\rangle.
\end{equation}
On the other hand, since $\Psi_k \rightarrow \Psi$ weakly in
$\cQ_{N,A}$ and since $ |x_i-x_j|^{-1}$ is a bounded operator from
$\cQ_{N,A}$
to $\mathcal{H}_N$ we obtain that
\begin{equation*}
 |x_i-x_j|^{-1} \Psi_k \rightarrow  |x_i-x_j|^{-1}\Psi, \text{ weakly in }  \mathcal{H}_N.
\end{equation*}
Since, moreover, $\Psi_k \rightarrow \Psi$ strongly in
$\mathcal{H}_N$ we conclude that
\begin{equation}\label{psimini2}
  \langle \Psi,  |x_i-x_j|^{-1} \Psi \rangle =
  \lim_{k \rightarrow \infty} \langle \Psi_k,  |x_i-x_j|^{-1} \Psi_k \rangle.
\end{equation}
In addition,
\begin{equation}\label{psimini33}
D(\rho_{\Psi})-D(\rho_k)=D(\rho_{\Psi}-\rho_k,\rho_{\Psi})+D(\rho_k,\rho_{\Psi}-
\rho_{k}).
\end{equation}
We will show that
\begin{equation}\label{psimini31}
D(\rho_k,\rho_{\Psi}- \rho_{k}) \rightarrow 0.
\end{equation}
Indeed, using \eqref{ddef} and \eqref{vs} we obtain that
\begin{equation}\label{ddifdeco}
D(\rho_k,\rho_{\Psi}- \rho_{k})=\int V_{\rho_k} (\rho_{\Psi}-\rho_k)
dx.
\end{equation}
But using Lemma \ref{minbound} (b) together with \eqref{vs},
\eqref{v4} with $r=\infty$, and \eqref{Hardiamcons}
 we can prove that
\begin{equation}\label{vrkbound}
\sup_k \|V_{\rho_k}\|_{L^\infty}< \infty.
\end{equation}

 Since $\Psi_k \rightarrow
\Psi$ in $\mathcal{H}_N$ we obtain that
$\|\rho_{\Psi}-\rho_k\|_{L^1} \rightarrow 0$  which together with
\eqref{ddifdeco} and \eqref{vrkbound} implies \eqref{psimini31}.
Similarly,
\begin{equation}\label{psimini32}
D(\rho_{\Psi}-\rho_k,\rho_{\Psi}) \rightarrow 0.
\end{equation}
Combining \eqref{psimini31}, \eqref{psimini32} and
\eqref{psimini33} we obtain that
\begin{equation}\label{psimini3}
D(\rho_{\Psi})=\lim_{k \rightarrow \infty} D(\rho_k).
\end{equation}
The relations  \eqref{PTred}, \eqref{psimini1}, \eqref{psimini2} and
\eqref{psimini3} give \eqref{psimini} as desired.
\end{proof}


\begin{thebibliography}{10}

\bibitem{AG}
Shmuel Agmon.
\newblock {\em Lectures on exponential decay of solutions of second-order
  elliptic equations: bounds on eigenfunctions of {$N$}-body {S}chr\"odinger
  operators}, volume~29 of {\em Mathematical Notes}.
\newblock Princeton University Press, Princeton, NJ, 1982.

\bibitem{AL}
Ioannis Anapolitanos and Benjamin Landon.
\newblock The ground state energy of the multi-polaron
 in the strong coupling limit.
 \newblock {\em arXiv:1212.3571}.
	

\bibitem{BD1996}
F.~Brosens and J.~T. Devreese.
\newblock Stability of bipolarons in the presence of a magnetic field.
\newblock {\em Phys. Rev. B}, 54:9792--9808, Oct 1996.

\bibitem{BKD2005}
F.~Brosens, S.~N. Klimin, and J.~T. Devreese.
\newblock Variational path-integral treatment of a translation invariant
  many-polaron system.
\newblock {\em Phys. Rev. B}, 71:214301, Jun 2005.

\bibitem{CFKS}
H.~L. Cycon, R.~G. Froese, W.~Kirsch, and B.~Simon.
\newblock {\em Schr\"odinger operators with application to quantum mechanics
  and global geometry}.
\newblock Texts and Monographs in Physics. Springer-Verlag, Berlin, study
  edition, 1987.

\bibitem{FLST}
Rupert~L. Frank, Elliott~H. Lieb, Robert Seiringer, and Lawrence~E. Thomas.
\newblock Stability and absence of binding for multi-polaron systems.
\newblock {\em Publ. Math. Inst. Hautes \'Etudes Sci.}, (113):39--67, 2011.

\bibitem{Fr}
G.~Friesecke.
\newblock The multiconfiguration equations for atoms and molecules: charge
  quantization and existence of solutions.
\newblock {\em Arch. Ration. Mech. Anal.}, 169(1):35--71, 2003.

\bibitem{GHW}
M.~Griesemer, F.~Hantsch, and D.~Wellig.
\newblock On the magnetic {P}ekar functional and the existence of bipolarons.
\newblock {\em Rev. Math. Phys.}, 24(6):1250014, 13, 2012.

\bibitem{GW}
  M.~Griesemer and D.~Wellig.
\newblock Strong coupling polarons in electromagnetic fields. In preparation. 

\bibitem{GM2010}
Marcel Griesemer and Jacob~Schach M{\o}ller.
\newblock Bounds on the minimal energy of translation invariant {$N$}-polaron
  systems.
\newblock {\em Comm. Math. Phys.}, 297(1):283--297, 2010.

\bibitem{Le}
M.~Lewin.
\newblock Geometric methods for nonlinear many-body quantum systems.
\newblock {\em J. Funct. Anal.}, 260(12):3535--3595, 2011.

\bibitem{L}
Elliott~H. Lieb.
\newblock Existence and uniqueness of the minimizing solution of {C}hoquard's
  nonlinear equation.
\newblock {\em Studies in Appl. Math.}, 57(2):93--105, 1976/77.

\bibitem{Lieb1984}
Elliott~H. Lieb.
\newblock Bound on the maximum negative ionization of atoms and molecules.
\newblock {\em Phys. Rev. A}, 29:3018--3028, Jun 1984.

\bibitem{LL}
Elliott~H. Lieb and Michael Loss.
\newblock {\em Analysis}, volume~14 of {\em Graduate Studies in Mathematics}.
\newblock American Mathematical Society, Providence, RI, second edition, 2001.

\bibitem{LieThi1986}
Elliott~H. Lieb and Walter~E. Thirring.
\newblock Universal nature of van der waals forces for coulomb systems.
\newblock {\em Phys. Rev. A}, 34:40--46, Jul 1986.

\bibitem{Li}
P.-L. Lions.
\newblock The concentration-compactness principle in the calculus of
  variations. {T}he locally compact case. {I}.
\newblock {\em Ann. Inst. H. Poincar\'e Anal. Non Lin\'eaire}, 1(2):109--145,
  1984.

\bibitem{VPD1990}
G.~Verbist, F.M. Peeters, and J.~T. Devreese.
\newblock The stability region of bipolaron formation in two and three
  dimensions in relation to high-tc superconductivity.
\newblock {\em Solid State Communications}, 76:1005--1007, Nov 1990.

\end{thebibliography}

\end{document}